\newtheorem{theorem}{Theorem}
\newtheorem{corollary}[theorem]{Corollary}
\newtheorem{defn}[theorem]{Definition}
\newtheorem{proposition}[theorem]{Proposition}
\newtheorem{remark}[theorem]{Remark}
\newtheorem{observation}[theorem]{Observation}
\newcommand{\bfu}{\mathbf{u}}
\newcommand{\bfb}{\mathbf{b}}
\newcommand{\bfe}{\mathbf{e}}
\newcommand{\bfp}{\mathbf{p}}
\newcommand{\bfq}{\mathbf{q}}
\newcommand{\bfr}{\mathbf{r}}
\newcommand{\calX}{\mathcal{X}}
\newcommand{\calB}{\mathcal{B}}
\newcommand{\bbR}{\mathbb{R}}
\newcommand{\veps}{\varepsilon}
\newcommand{\srw}{\rightarrow}
\newcommand{\grad}{\nabla}
\newcommand{\tuple}[1]{\langle #1 \rangle}
\newcommand{\uppr}[1]{{#1\!\!\!\uparrow}}
\newcommand{\lowr}[1]{{#1\!\!\!\downarrow}}
\newcommand{\onenorm}[1]{||#1||_1}
\newcommand{\enorm}[1]{||#1||_2}
\newcommand{\enormsq}[1]{||#1||_2^2}
\newcommand{\commentout}[1]{}
\newcommand{\scite}{\cite}
\title{Eliciting Forecasts from Self-interested Experts: Scoring Rules
  for Decision Makers}
\author{
  \alignauthor Craig Boutilier\\
    \affaddr{Department of Computer Science}\\
    \affaddr{University of Toronto, Canada}\\
    \email{cebly@cs.toronto.edu}
}
\begin{document}


\maketitle

\begin{abstract}
\small
Scoring rules for eliciting expert 
predictions of random variables are usually developed assuming
that experts derive utility \emph{only} from the quality of their predictions
(e.g., score awarded by the rule, or payoff in a prediction market).
We study a more realistic setting in which (a) the principal is a decision
maker and will take a decision based on the expert's prediction; and
(b) the expert has an inherent interest in the decision. For example,
in a corporate decision market, the expert may derive different
levels of utility from the actions taken by her manager. As a consequence
the expert will usually have an incentive to misreport her forecast
to influence the choice of the decision maker if typical
scoring rules are used. We develop a general model for
this setting and introduce the concept of a \emph{compensation rule}.
When combined with the expert's inherent utility for
decisions, a compensation rule
induces a \emph{net scoring rule} that behaves
like a normal scoring rule. Assuming full knowledge of expert
utility, we provide a complete characterization of
all (strictly) proper compensation rules. 
We then analyze the situation where the
expert's utility function is not fully known to the decision maker. 
We show bounds on: (a) expert incentive to misreport; (b) the degree
to which an expert will misreport; and (c) decision maker loss 
in utility due to such 
uncertainty. These bounds depend in natural ways on the degree of
uncertainty, the \emph{local} degree of convexity of net scoring 
function, and natural properties of the decision maker's utility function.
They also suggest optimization procedures for the design of
compensation rules.  Finally, we briefly discuss
the use of compensation rules as market scoring rules for
self-interested experts in a prediction market.
\end{abstract}

\section{Introduction}

Eliciting predictions of uncertain events from \emph{experts}
or other knowledgeable agents---or 
relevant information pertaining to events---is a fundamental 
problem of study in statistics, economics, operations research,
artificial intelligence and a variety of other areas
\cite{wolfers:predictionSurvey,chen-pennock:AIMag2011}. Increasingly, 
robust mechanisms for prediction are being developed, proposed
and/or applied in real-world domains ranging from elections and sporting
events, to events of public interest (e.g., disease spread
or terrorist action), to corporate decision making. Indeed, the very idea of
crowd-sourcing and information (or prediction) markets is predicated
on the existence of practical mechanisms for information elicitation
and aggregation.

A key element in any prediction mechanism involves providing an expert agent
with the appropriate incentives to reveal a forecast they believe
to be accurate. Many forms of ``outcome-based'' 
\emph{scoring rules}, either individual or market-based,
provide experts with incentives to (a) provide sincere 
forecasts; (b) invest effort to improve the accuracy of their
personal forecasts; and (c) participate in the mechanism if they believe 
they can improve the quality of the principal's forecast. However,
with just a few exceptions (see, e.g., \cite{Shi-Conitzer:wine09,othman-sandholm:aamas10,chen-kash:aamas11,Dimitrov-Sami:acmec08}, 
most work fails to account for
the ultimate use to which the forecast will be put.
Furthermore, even these models assume that the experts who provide
their forecasts derive no utility from the final forecast, or 
how it will be used, except insofar as they will be rewarded
by the prediction mechanism itself.

In many real-world uses of prediction mechanisms, this assumption 
is patently false.  Setting
aside purely informational and entertainment uses of information
markets, the principal is often interested in exploiting the elicited
forecast in order to make a \emph{decision} 
\cite{hanson-decisionMarkets99,othman-sandholm:aamas10,chen-kash:aamas11}. 
In corporate 
prediction markets, the principal may base strategic business decisions
on internal predictions of uncertain events. In a hiring committee,
the estimated probability of various candidates accepting offers (and
being given offers by competitors) will influence the order in which
(and whether) offers are made. Of course, other
examples abound. Providing appropriate incentives in
the form of scoring rules is often difficult in such settings,
especially when the outcome distribution is conditional on the
decision ultimately taken 
by the principal \cite{othman-sandholm:aamas10,chen-kash:aamas11}. 
However, just as critically,
in these settings, the experts whose forecasts are sought often
have \emph{their own interests} in seeing specific decisions being 
taken, interests that are not (fully) aligned with those of the principal.
For example, in a corporate setting, an expert from a certain
division may have an incentive to misreport demand for 
specific products,
thus influencing R{\&}D decisions that favor her division. In
a hiring committee, an committee member may misreport the odds 
that a candidate will accept a competing position in order to bias
the ``offer strategy'' in a way that favors his preferred candidate.

In this work, we develop what we believe to be
the first class of scoring rules that incentivizes truthful
forecasts even when experts have an interest in the decisions taken
by the principal, and hence would like to provide forecasts
that manipulate that decision ``in their favor.''
Other work has studied both decision making and
incentive issues in prediction markets, but
none that we are aware of addresses the natural question of expert
self-interest in the decisions of the principal. 

Hanson \cite{hanson-decisionMarkets99} introduced the term
\emph{decision markets} to refer to the broad notion of
prediction markets where experts offer forecasts for events conditional on 
some policy being adopted or a decision being taken.
Othman and Sandholm 
\scite{othman-sandholm:aamas10} provide the first explicit, formal
treatment of a principal who makes decisions
based on expert forecasts. They address
several key difficulties that arise
due to the conditional nature of forecasts, but
assume that the experts themselves 
have no direct interest in the decision that is taken.
Chen and Kash \scite{chen-kash:aamas11}
extend this model to a wider class of informational
settings and decision policies.
Dimitrov and Sami \scite{Dimitrov-Sami:acmec08}
consider the issue of strategic
behavior across multiple markets and the possibility that an
expert may misreport her beliefs in one market to manipulate prices
(and hence gain some advantage) in another. Similarly, Conitzer
\scite{Conitzer:uai09} explores strategic aspects of prediction markets
through their connections to mechanism design. While the mechanism
design approach could prove very useful for the problems we
address (see concluding remarks in Sec.~\ref{sec:conclude}),
Conitzer assumes an expert's utility is derived solely from
the payoff provided by the prediction mechanism.
Also related to the model we develop here is the analysis of Shi et 
al.~\cite{Shi-Conitzer:wine09}, who consider
experts that, once they report their forecasts, can take
action to alter
the probabilities of the outcomes in question. Unlike our model,
they do not consider expert utility apart
from the payoff offered by the mechanism (though, as in our model,
the principal does have a utility function that dictates the
value of an expert report). 

Our basic building block is a scoring rule for a
single expert who knows the principal's \emph{policy}---i.e., mapping
from forecasts to decisions---and where the principal knows the
expert's utility for decisions. We show that the scoring rule
must compensate the expert in a very simple and intuitive way
based on her utility function. Specifically, the principal
uses a \emph{compensation function} that, when added to the
\emph{inherent utility} the expert derives from the principal's
decision, induces a proper scoring rule.
In a finite decision
space, an expert's \emph{optimal} utility function
is piecewise-linear and convex in
probability space---we describe one natural scoring rule based
on this function that is proper, but not
strictly so. We then provide a complete characterization of all
proper compensation functions. We also characterize those which,
in addition, satisfy weak and strong \emph{participation constraints}
that ensure an expert will be sufficiently compensated to ``play the
game.''

We then provide a detailed analysis of both \emph{expert uncertainty} in 
the principal's policy, and \emph{principal uncertainty}
in the expert's utility for decisions.
First we observe that the expert need not know the principal's
policy prior to providing her forecast as long as she can verify the
decision taken after the fact. Second, we analyze the impact of
principal uncertainty regarding the expert's utility function.
In general, the principal cannot ensure truthful reporting.
However, we show that, given bounds on this
uncertainty, bounds can then be derived on all of the following: 
(i) the expert's incentive to
misreport; (ii) the deviation of the expert's misreported forecast
from its true beliefs; and (iii) the loss in utility the principal
will realize due to this uncertainty. The first two bounds rely on
the notion of \emph{strong convexity} of the net scoring function
induced by the compensation rule. The third bound uses natural
properties of the principal's utility function. Apart from
bounds derived from global strong convexity, we show
that these bounds can be significantly tightened using
\emph{local} strong convexity, specifically, by ensuring merely
that the net scoring function is sufficiently (and differentially)
strongly convex near the decision boundaries of the principal's
policy. These bounds suggest computational optimization methods for
for designing compensation rules (e.g., using
splines related to the principal's utility function).
We conclude by briefly discussing a market scoring rule (MSR) based 
on our one-shot compensation rule. Using this
MSR, the principal may need to provide
more generous compensation to each expert than in the one-shot case,
simply to ensure participation; but in some special cases,
no additional compensation is needed.

The paper is organized as follows.  We begin with a basic background
on scoring rules for prediction mechanisms in Sec.~\ref{sec:background}.
In Sec.~\ref{sec:scoringrules} we define our model for analyzing
the behavior of self-interested experts, 
introduce \emph{compensation rules}, and show
that the resulting \emph{net scoring function} can
be used to analyze expert behavior. We 
provide a complete characterization of (strict) proper compensation rules and 
and further characterize two subclasses of compensation
rules that satisfy the two participation constraints mentioned above.
In Sec.~\ref{sec:uncertain} we relax two assumptions in our model.
We first show the expert need to be aware of the principal's policy 
for our model to work. We
then consider a principal that has imperfect knowledge of
the expert's utility function, and using the notion of
(local and global) strong convexity derives bounds on the
expert's incentive to misreport and the impact on the quality
of the principal's decision. 
After a brief discussion of market scoring rules in Sec.~\ref{sec:msr},
we conclude in Sec.~\ref{sec:conclude} 
with a discussion of several avenues for future research.

\section{Background: Scoring Rules}
\label{sec:background}

We begin with a very brief review of relevant concepts from
the literature on scoring rules and prediction markets. For
comprehensive overviews, see the surveys 
\cite{wolfers:predictionSurvey,chen-pennock:AIMag2011}.

We assume that an agent---the \emph{principal}---is interested
in assessing the distribution of some discrete random variable $\calX$ with 
finite domain $X = \{x_1, \ldots, x_m\}$. Let $\Delta(X)$ denote the set of
distributions over $X$, where $\bfp\in\Delta(X)$ is a nonnegative
vector $\tuple{p_1, \ldots, p_m}$ s.t.\ $\sum_i p_i = 1$.
The principle can engage one or more experts to provide a forecast 
$\bfp \in \Delta(X)$.  We focus first on the case of a single expert 
$E$.  For instance, to consider a simple toy example we
use throughout the sequel, the mayor of a 
small town may ask the local weather forecaster to 
offer a probabilistic estimate of weather
conditions for the following weekend.\footnote{More significant
examples in the domains of public policy or corporate decision
making, as discussed above, can easily be constructed by the
reader.}

We assume $E$ has beliefs $\bfp$ about $\calX$, but a key question is
how to incentivize $E$ to report $\bfp$ faithfully (and devote
reasonable effort to developing \emph{accurate} beliefs).
A variety of \emph{scoring rules} have been
developed for just this purpose 
\cite{brier:1950,savage:jasa71,mccarthy:pnas56,gneiting:jasa07}. A
\emph{scoring rule} is a function $S: \Delta(X) \times X\srw\bbR$
that provides a score (or payoff) $S(\bfr,x_i)$ to $E$ if she
reports forecast $\bfr$ and the realized outcome of $\calX$ is $x_i$,
essentially rewarding $E$ for her predictive ``performance'' 
\cite{brier:1950,mccarthy:pnas56}.
If $E$ has beliefs $\bfp$ and reports $\bfr$, her expected score
is $S(\bfr,\bfp) = \sum_i S(\bfr,x_i) p_i$. We say $S$ is a
\emph{proper scoring rule} iff a truthful report is optimal for $E$:
\begin{align}
S(\bfp,\bfp) \geq S(\bfr,\bfp), \quad \forall \bfp,\bfr\in \Delta(X)
\label{eq:properscore}
\end{align}
We say that $S$ is \emph{strictly proper} if inequality~(\ref{eq:properscore})
is strict for $\bfr \neq \bfp$ (i.e., $E$ has strict disincentive to
misreport). A variety of strictly proper scoring rules have been
developed, among the more popular being the log scoring rule,
where $S(\bfp,x_i) = a \log p_i + b_i$ (for arbitrary constants
$a > 0$ and $b_i$) \cite{mccarthy:pnas56,savage:jasa71}. 
In what follows, we will restrict attention to \emph{regular}
scoring rules in which payment $S(\bfr,x_i)$ is bounded whenever
$r_i > 0$.

Proper scoring rules can be fully characterized in terms of
convex cost functions \cite{mccarthy:pnas56,savage:jasa71}; here we review the 
formulation of Gneiting and Raftery \scite{gneiting:jasa07}.
Let $G: \Delta(X) \srw \bbR$ be any convex function over 
distributions---we refer to $G$ as a \emph{cost function}. We denote
by $G^*: \Delta(X) \srw \bbR^m$ some \emph{subgradient} of $G$,
i.e., a function satisfying 
$$G(\bfq) \geq G(\bfp) + G^*(\bfp)\cdot(\bfq-\bfp)$$
for all $\bfp,\bfq\in\Delta(X)$.\footnote{If $G$ is differentiable at
$\bfp$ then the subgradient at that point is unique, namely,
the gradient $\grad G(\bfp)$.} Such cost functions and associated
subgradients can be used to derive any proper scoring rule.
\begin{theorem} \cite{mccarthy:pnas56,savage:jasa71,gneiting:jasa07}
\label{thm:savage}
A regular scoring rule $S$ is proper iff 
\begin{align}
S(\bfp,x_i) = G(\bfp) - G^*(\bfp)\cdot(\bfp) + G^*_i(\bfp)
  \label{eq:subgradient}
\end{align}
for some convex $G$ and subgradient $G^*$. $S$ is strictly proper
iff $G$ is strictly convex. 
\end{theorem}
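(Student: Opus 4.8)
The plan is to prove both directions by identifying the cost function $G$ with the expert's \emph{optimal} expected score and the subgradient with the vector of realized scores; once these identifications are made, the subgradient inequality becomes a verbatim restatement of properness.

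First I would establish sufficiency. Suppose $S(\bfp,x_i) = G(\bfp) - G^*(\bfp)\cdot\bfp + G^*_i(\bfp)$ for a convex $G$ and subgradient $G^*$. Expanding the expected score of reporting $\bfr$ under beliefs $\bfp$ and using $\sum_i p_i = 1$ gives
$$S(\bfr,\bfp) = \sum_i S(\bfr,x_i)\, p_i = G(\bfr) - G^*(\bfr)\cdot\bfr + G^*(\bfr)\cdot\bfp = G(\bfr) + G^*(\bfr)\cdot(\bfp-\bfr).$$
In particular the truthful report yields $S(\bfp,\bfp) = G(\bfp)$, so properness amounts to $G(\bfp) \geq G(\bfr) + G^*(\bfr)\cdot(\bfp-\bfr)$ for all $\bfr$, which is exactly the defining subgradient inequality of $G$ at $\bfr$. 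When $G$ is strictly convex this inequality is strict for $\bfp \neq \bfr$, yielding strict properness.

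For necessity, given a proper $S$ I would set $G(\bfp) := S(\bfp,\bfp)$ and take as the candidate subgradient $G^*_i(\bfp) := S(\bfp,x_i)$. With this choice the claimed identity is immediate, since $G^*(\bfp)\cdot\bfp = \sum_i S(\bfp,x_i)\, p_i = S(\bfp,\bfp) = G(\bfp)$, whence $G(\bfp) - G^*(\bfp)\cdot\bfp + G^*_i(\bfp) = S(\bfp,x_i)$. It then remains to verify that $G^*(\bfp)$ really is a subgradient and that $G$ is convex. Evaluating the candidate supporting hyperplane at $\bfp$ against $\bfq$ gives $G(\bfp) + G^*(\bfp)\cdot(\bfq-\bfp) = \sum_i S(\bfp,x_i)\, q_i = S(\bfp,\bfq)$, while $G(\bfq) = S(\bfq,\bfq)$; properness ($S(\bfq,\bfq) \geq S(\bfp,\bfq)$) is therefore precisely the subgradient inequality $G(\bfq) \geq G(\bfp) + G^*(\bfp)\cdot(\bfq-\bfp)$. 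Convexity of $G$ then follows from the standard fact that a function possessing a supporting affine minorant at every point of a convex domain is convex: writing $\bfp = \lambda\bfa + (1-\lambda)\bfb$, applying the inequality at $\bfp$ to both $\bfa$ and $\bfb$, and taking the $\lambda$-weighted combination recovers $\lambda G(\bfa) + (1-\lambda)G(\bfb) \geq G(\bfp)$. The strictly proper case produces strict subgradient inequalities and hence strict convexity by the same combination argument.

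The calculations are routine, so the main obstacle is really the conceptual bookkeeping of choosing $G$ and $G^*$ correctly; two technical points nonetheless deserve care. First, $\Delta(X)$ is not full-dimensional in $\bbR^m$ (it lies in the hyperplane $\sum_i p_i = 1$), so subgradients are determined only modulo the all-ones direction; this causes no difficulty because every admissible displacement $\bfq-\bfp$ satisfies $\sum_i(q_i - p_i) = 0$, so that direction drops out of every inner product, and the theorem asserts only the existence of \emph{some} subgradient. Second, the regularity assumption (bounded payment where $r_i > 0$) is what guarantees that $G(\bfp) = S(\bfp,\bfp)$ and the candidate subgradient are finite on the relevant faces of the simplex, so that the expectations and the subgradient inequality are well defined rather than involving $\pm\infty$; I would invoke it precisely to justify this finiteness.
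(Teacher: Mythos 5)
The paper does not actually prove this theorem: it is imported from the literature (McCarthy, Savage, Gneiting--Raftery) as background, so there is no in-paper proof to compare against. Your argument is correct and is the standard one --- identifying $G(\bfp)=S(\bfp,\bfp)$ and $G^*_i(\bfp)=S(\bfp,x_i)$, observing that properness is verbatim the subgradient inequality, and recovering convexity from the existence of a supporting affine minorant at every point --- and it matches the abbreviated version of this same device that the paper itself deploys inside its proof of Theorem~\ref{thm:characterize} (where $G(\bfp)=S(\bfp,\bfp)=\max_{\bfq}S(\bfq,\bfp)$ is noted to be a maximum of linear functions, hence convex). Your two technical remarks, on the rank-deficiency of $\Delta(X)$ and on regularity guaranteeing finiteness, are also sound and are the right places to be careful.
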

Intuitively, Eq.~\ref{eq:subgradient}
defines a hyperplane
$$H_{\bfp} = \tuple{S(\bfp,x_1),\ldots,S(\bfp,x_m)},$$
for each point $\bfp$, that is subtangent to $G$ at $\bfp$. This
defines a linear function, for any fixed report $\bfp$, giving
the expected score of that report given beliefs 
$\bfq$: $S(\bfp,\bfq) = H_{\bfp}\cdot\bfq$.
An illustration is given in Fig.~\ref{fig:subgradient} for
a simple one-dimensional (two-outcome) scenario.

\begin{figure}
\centering
\includegraphics[scale=0.46]{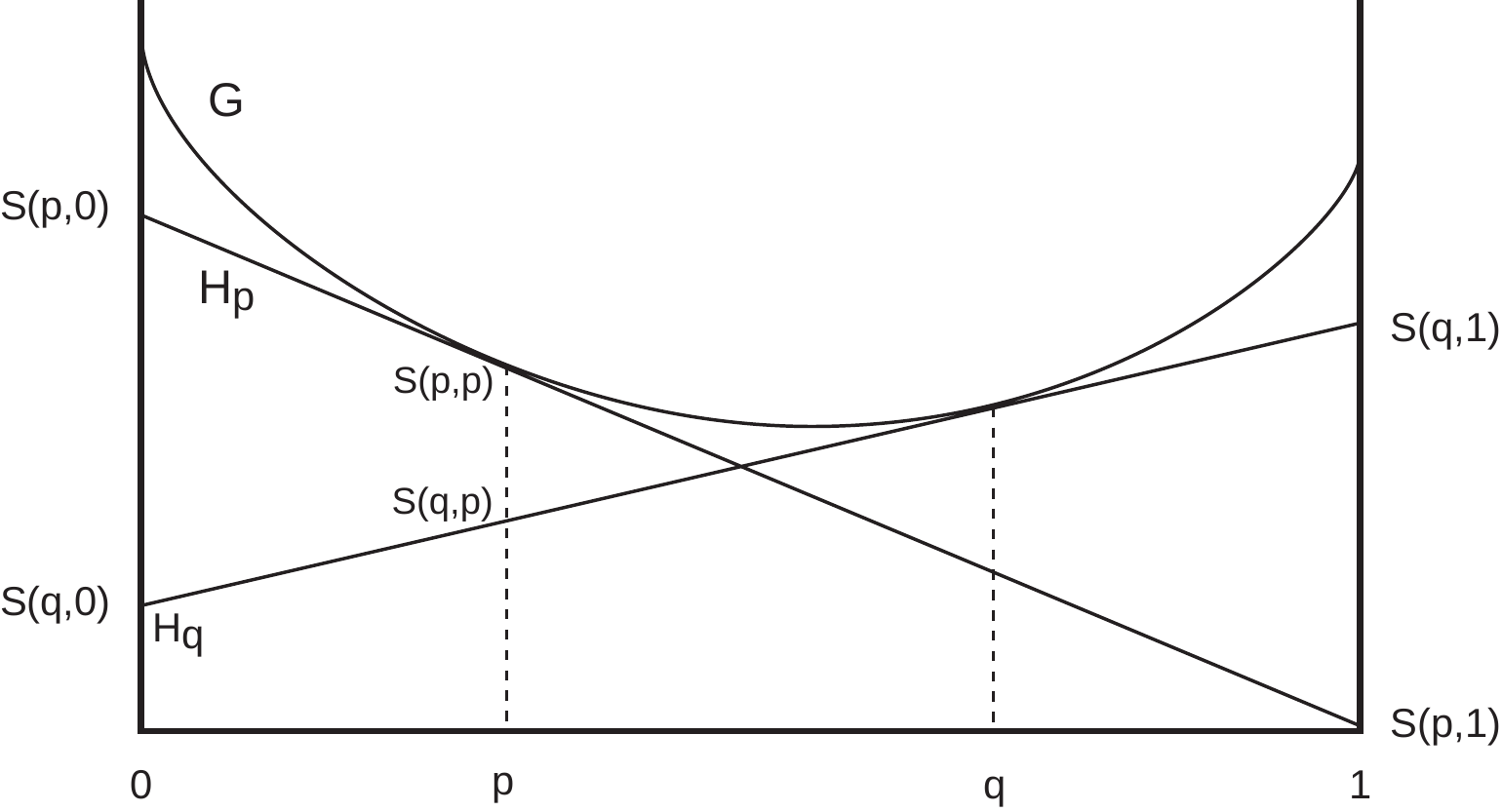}
\vskip -2mm
{\caption{\textrm\small Illustration of a proper scoring rule. If 
  the expert reports 
  $p$, its expected score (relative to its true beliefs) is given by 
  the subtangent hyperplane $H_p$. For any report $q$ different from
  $p$, the expected score $S(q,p) = H_q\cdot p$ must be less than the 
  expected score $S(p,p) = H_p\cdot p$ of truthful reporting.}
\label{fig:subgradient}
}
\vskip -2mm
\end{figure}

There are a number of prediction market mechanisms that allow the
principal to extract information from multiple experts; see
\cite{wolfers:predictionSurvey,chen-pennock:AIMag2011} for
excellent surveys. Here we focus on \emph{market scoring rules
(MSRs)}
\cite{hanson-scoringrules:isf03,hanson-scoringrules:jpm07}, which allow
experts to (sequentially) change the forecasted $\bfp$ using any
proper scoring rule $S$. Given the current forecast $\bfp'$, an
expert can change the forecast to $\bfp$ if she is willing
to pay according to $S(\bfp',\cdot)$ and receive payment $S(\bfp,\cdot)$.
If her true beliefs $\bfp$ are different from $\bfp'$
and the scoring rule is strictly proper, then
she has incentive to participate and report truthfully.
Under certain conditions, MSRs can be interpreted as automated
market makers \cite{chen-pennock:uai07}. Since each expert pays
the amount due to the previous expert for her prediction, the net
payment of the principal is the score associated with the 
final prediction.

\section{Scoring Rules for Self-interested Experts}
\label{sec:scoringrules}

Scoring rules in standard models assume that an expert offering 
a forecast
is uninterested in any aspect her forecast other than the score
she will derive from her prediction.
As discussed above, there are many settings where the principal
will make a decision based on the received forecast, and the expert has
a direct interest in this decision. In this section, we develop
a model for this situation and devise a class of scoring rules that
incentive self-interested agents to report their true beliefs.

\subsection{Model Formulation}
\label{sec:model}

We assume the principal, or \emph{decision maker (DM)}, will elicit
a forecast of $\calX$ from expert $E$, and make a decision that
is influenced by this forecast. Let $D = \{d_1, \ldots, d_n\}$
be the set of possible decisions, and $u_{ij}$ be DM's utility should
it take decision $d_i$ with $x_j$ being the realization of $\calX$.
Letting $\bfu_i = \tuple{u_{i1},\cdots,u_{im}}$, the expected utility
of decision $d_i$ given distribution $\bfp$ is
$U_i(\bfp) = \bfu_i \cdot \bfp$. For any beliefs $\bfp$, DM will
want to take the decision that maximizes expected utility, giving
DM the \emph{utility function}
$U(\bfp) = \max_{i\leq n} U_i(\bfp)$.  Since each $U_i$ is a linear
function of $\bfp$, $U$ is piecewise linear and convex (PWLC).
Furthermore, each $d_i$ is optimal in a (possibly empty) convex region
of belief space 
$D_i = \{\bfp : \bfu_i \cdot \bfp \geq \bfu_j \cdot \bfp, \forall j \}.$
We assume DM acts optimally and that it has a policy
$\pi: \Delta(X) \srw D$ that selects some optimal decision 
$\pi(\bfp)$ for any expert forecast $\bfp$. In what follows, we
take $D_i = \pi^{-1}(d_i)$. We denote by $D_{ij}$ the
(possibly empty) \emph{boundary} between $D_i$ and $D_j$. Notice
that for any $\bfp\in D_{ij}$ we must have 
$U_i(\bfp) = U_j(\bfp)$.\footnote{Assuming ``ties'' at boundaries
are broken consistently, the
regions $D_i$ will be convex, but possibly open.}

In our running example, suppose the mayor must decide whether to
hold a civic ceremony in the town park or at 
a private banquet facility.
Given a forecast probability $p$ of rain, she will make
outdoor arrangements at the park if $p$ falls below some
threshold $\tau$, and will rent the banquet facility if $p$ is above
$\tau$ (here $\tau$ is the indifference probability:
$U_{\mbox{\scriptsize park}}(\tau) = U_{\mbox{\scriptsize banq}}(\tau)$).

We note that this model of DM utility is slightly more
restricted than that of \cite{othman-sandholm:aamas10,chen-kash:aamas11}, who
allow the utility of each decision to depend on a different
random variable, and assume that the realization of a variable 
will be observed \emph{only if} the corresponding decision is taken.
This introduces difficulties in offering suitable incentives
for participation that do not arise in our setting; indeed, the
primary contribution of Othman and Sandholm \scite{Conitzer:uai09},
and the extension by Chen and Kash \scite{chen-kash:aamas11}, is a
characterization of a form of proper scoring in the face of these
complications.  We also
confine our attention primarily to a principal that maximizes its
expected utility given $E$'s report (in the terminology of
\cite{othman-sandholm:aamas10}, DM uses the \emph{max decision rule}),
though we remark on the possible use of stochastic policies 
by DM in Sec.~\ref{sec:utilityuncertain}.

Now suppose a single expert $E$ is asked to provide a forecast of
$\calX$ that permits DM to make a decision. Assume that $E$ knows
DM's policy $\pi$: knowledge of DM's utility function is sufficient
for knowledge of the policy but is not required (we discuss
the possibility of $E$ being uncertain about $\pi$ in
Sec.~\ref{sec:policyuncertain}). Further, assume
that $E$ has its own utility function or \emph{bias} $b$,
where  $b_{ij}$ is $E$'s utility should DM take decision $d_i$ and $x_j$
is the realization of $\calX$. Define 
$\bfb_i = \tuple{b_{i1},\cdots,b_{im}}$; and 
let $E$'s expected utility for $d_i$ given $\bfp$ be
$B_i(\bfp) = \bfb_i \cdot \bfp$.  
In our small example, the weather forecaster may be
related to the owner of the banquet facility, and will get some 
degree of satisfaction (or a small kickback) if the mayor's
ceremony is held there.

As with DM, $E$'s \emph{optimal utility
function} $B^*$ (if DM were acting on $E$'s behalf)
is PWLC:
\begin{align}
B^*(\bfp) = \max_i \bfb_i \cdot \bfp.
\label{eq:expOptUtil}
\end{align}
Denote by $D^*(\bfp)$ the decision $d_i$ that maximizes
Eq.~\ref{eq:expOptUtil}, i.e., $E$'s preferred decision given
beliefs $\bfp$ (see Fig.~\ref{fig:expertUtility} for an
illustration).

Of course, DM is pursuing its own policy $\pi$, not acting to
optimize $E$'s utility.  Hence $E$'s actual
utility for a specific report $\bfr$ under beliefs $\bfp$ is
given by 
\begin{align}
B^\pi(\bfr,\bfp) = \bfb_{\pi(\bfr)}\cdot\bfp;
\label{eq:expUtil}
\end{align}
that is, if she reports $\bfr$,
DM will take decision $\pi(\bfr) = d_k$ for some $k$,
and she will derive benefit 
$B_k(\bfp)$. We refer to $B^\pi(\bfr,\bfp)$ as $E$'s 
\emph{inherent utility} for reporting $\bfr$. Similarly,
$B(\bfr,x_i) = b_{i,\pi(\bfr)}$ is $E$'s inherent utility for report $\bfr$ 
under realization $x_i$. This is simply the inherent benefit she derives from 
the decision she induces DM to take. 
This is illustrated in Fig.~\ref{fig:expertUtility}.
Notice that $E$'s utility for reports, given any fixed beliefs
$\bfp$, is not generally continuous, with potential (jump) discontinuities
at DM's decision boundaries.

\begin{figure}
\centering
\includegraphics[scale=0.41]{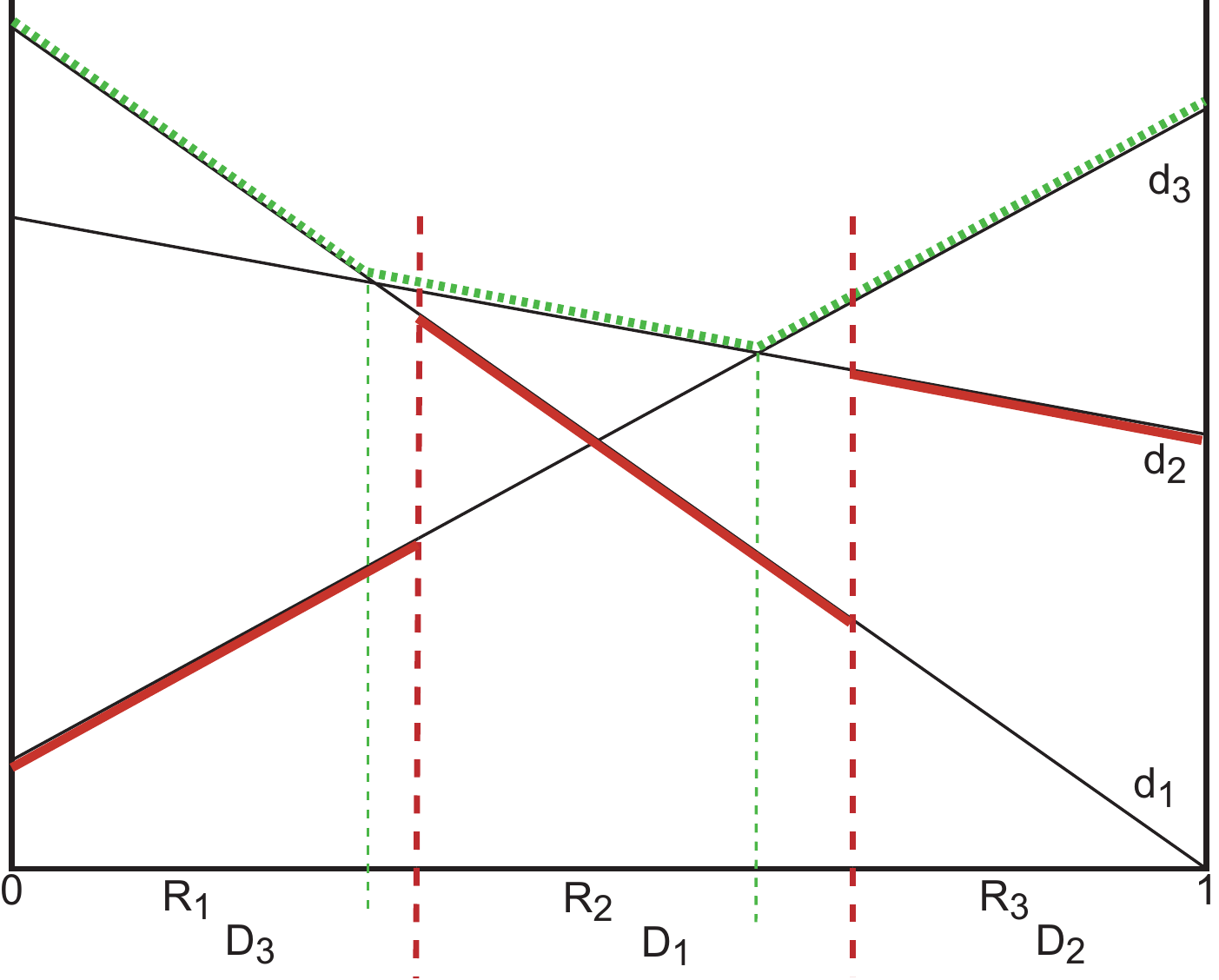}
\vskip -2mm
\textrm\small{\caption{Expert utility function. $E$'s utility for
  each decision $d_i$ is given by the corresponding hyperplane
 (here, thin black line).  $E$'s ``optimal'' utility $B^*$
  is the PWLC function shown by the dotted green line
  (i.e., the upper surface), with
  each $R_i$ denoting the regions of belief 
  space where $D^*(\bfp) = d_i$.  Regions $D_i$ represent
  DM's policy, where $\pi(\bfp) = d_i$ for $\bfp\in D_i$.
  The thick red lines denote the (discontinuous) utility 
  $B^\pi$ that $E$
  will receive from (truthfully) reporting her belief.}
\label{fig:expertUtility}
}
\vskip -2mm
\end{figure}

Without some scoring rule, there is a clear incentive for $E$ to
misreport its true beliefs to induce DM to take a decision that $E$
prefers, thereby causing DM to take a suboptimal decision.
For instance, in
Fig.~\ref{fig:expertUtility}, if $E$'s
 true beliefs $\bfp$ lie in $R_1$,
its preferred decision is $d_1$; but truthful reporting will induce
DM to take decision $d_3$. $E$ has greater inherent utility for
reporting (any) $\bfr \in D_1$. Indeed, its gain from the misreport
is $\bfp \cdot (\bfb_1 - \bfb_3)$. Equivalently, 
$E$ stands to lose $\bfp \cdot (\bfb_1 - \bfb_3)$ by reporting truthfully.
Intuitively, a proper scoring rule would remove this incentive
to misreport.

\subsection{Compensation Rules}

If DM knows $E$ utility function, it could reason about
$E$'s incentive to misreport and revise its decision policy
accordingly. Of course, this would naturally lead to a Bayesian
game requiring analysis of its Bayes-Nash equilibria, and
generally leaving DM with uncertainty about $E$'s true 
beliefs.\footnote{See Dimitrov and Sami \scite{Dimitrov-Sami:acmec08}
and Conitzer \scite{Conitzer:uai09} for just such a game-theoretic treatment
of prediction markets (without decisions).}
Instead, we wish to derive a scoring rule that DM can use to incentivize
$E$ to report truthfully. 

A \emph{compensation function} $C:\Delta(X)\times X\srw \bbR$ is 
a mapping from reports and outcomes into payoffs, exactly like a
standard scoring rule. Unlike a scoring rule, however, $C$ does not
fully
determine $E$'s utility for a report; one must also take into account
the inherent utility $E$ derives from the decision it prompts the
DM to take.
Any compensation function $C$ induces a \emph{net scoring function}:
\begin{align}
S(\bfp,x_i) &= C(\bfp,x_i) + B^\pi(\bfp,x_i)
\label{eq:netscore}
\end{align}
$E$'s expected net score for report $\bfr$ under beliefs $\bfp$
is $S(\bfr,\bfp) = C(\bfr,\bfp) + B^\pi(\bfr,\bfp)$, where
$C(\bfr,\bfp) = \sum_i p_i C(\bfr,x_i)$ is $E$'s expected compensation.

One natural way to structure the compensation
function is to use $C$ to compensate $E$ 
for the loss in inherent utility incurred by reporting 
its true beliefs $\bfp$ (relative to its best report).
This would remove any incentive for $E$ to misreport.
We define a particular compensation function $C_1$ that accounts
for this loss:
\begin{align}
C_1(\bfp,x_i) = b_{i,D^*(\bfp)} - b_{i,\pi(\bfp)}.
\label{eq:c1}
\end{align}
$C_1(\bfp,x_i)$ is simply the difference between $E$'s 
\emph{realized} utility for its optimal decision (relative to
its report $\bfp$) and the actual decision she induced.
$C_1$ does not satisfy the usual properties of scoring rules: it
is given by a subgradient of the loss function, which is not convex,
nor even continuous. However, $E$'s payoff for a report consists of
\emph{both} this compensation and its inherent utility, i.e., her net
score:
\begin{align}
S_1(\bfp,x_i) &= C_1(\bfp,x_i) + B^\pi(\bfp,x_i)\\
            &= (b_{i,D^*(\bfp)} - b_{i,\pi(\bfp)}) + b_{i,\pi(\bfp)})\\
            &= b_{i,D^*(\bfp)}
\label{eq:netscoreone}
\end{align}
$E$'s expected net score under beliefs $\bfp$ is identical to her
expected utility for the optimal decision $D^*(\bfp)$.
Hence, no other report can induce a decision
that gives her greater utility. Informally, this shows that truthful
reporting is optimal. It can be seen directly by observing that
the net score $S_1$ can be derived from Eq.~\ref{eq:subgradient}
by letting $G(\bfp) = 
  B^*(\bfp) = \max_{i\leq n} B_i(\bfp)$ be $E$'s optimal
utility function (which is PWLC, hence convex), and using the
subgradient $G^*(\bfp)$ given by the hyperplane corresponding to the
optimal decision $D^*(\bfp)$ at that point.\footnote{At interior
points of $E$'s decision regions, the hyperplane is the unique
subgradient. At $E$'s decision boundaries,
an arbitrary subgradient can be used.}

\begin{defn}
\label{defn:properC} 
A compensation function $C$ is \emph{proper} iff the expected
net score function $S$ satisfies $S(\bfp,\bfp) \geq S(\bfq,\bfp)$
for all $\bfp,\bfq\in\Delta(X)$. $C$ is \emph{strictly proper} if
the inequality is strict.
\end{defn}
We don't prove this formally since we prove a more general result
below, but the above informal argument shows:
\begin{proposition}
Compensation function $C_1$ is proper.
\end{proposition}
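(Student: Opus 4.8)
The plan is to show that the compensation function $C_1$ induces a net scoring function $S_1$ that is itself a proper scoring rule in the classical sense, and then invoke Theorem~\ref{thm:savage} directly. The key observation, already derived in Eq.~\ref{eq:netscoreone}, is that the net score collapses to $S_1(\bfp,x_i) = b_{i,D^*(\bfp)}$, where the $C_1$ compensation term exactly cancels the $\pi$-dependence in the inherent utility. This is the crucial simplification: although $C_1$ alone is neither convex nor continuous (its value jumps at DM's decision boundaries), adding back the inherent utility $B^\pi$ removes all dependence on DM's policy $\pi$ and leaves only $E$'s own optimal-decision structure.

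First I would set $G(\bfp) = B^*(\bfp) = \max_{i\leq n} \bfb_i \cdot \bfp$ and verify that $G$ is convex: it is a pointwise maximum of finitely many linear functions of $\bfp$, hence PWLC. Next I would identify the subgradient $G^*(\bfp)$ with the hyperplane $\bfb_{D^*(\bfp)}$ corresponding to $E$'s preferred decision at $\bfp$; at interior points of $E$'s regions $R_i$ this is the unique (gradient) subgradient, and at boundaries any maximizing $\bfb_i$ furnishes a valid subgradient. Then I would check that substituting this $G$ and $G^*$ into Eq.~\ref{eq:subgradient} reproduces exactly $S_1(\bfp,x_i) = b_{i,D^*(\bfp)}$: since $G^*(\bfp) = \bfb_{D^*(\bfp)}$ is linear, we have $G(\bfp) = G^*(\bfp)\cdot\bfp$, so the first two terms in Eq.~\ref{eq:subgradient} cancel and only $G^*_i(\bfp) = b_{i,D^*(\bfp)}$ remains. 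By Theorem~\ref{thm:savage}, $S_1$ is therefore a proper scoring rule, i.e., $S_1(\bfp,\bfp)\geq S_1(\bfq,\bfp)$, which is precisely the condition in Definition~\ref{defn:properC} for $C_1$ to be proper.

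The main obstacle is the care needed at the decision boundaries. Because $B^*$ is only piecewise linear, it fails to be differentiable on the boundaries between $E$'s regions $R_i$, so the gradient does not exist there and one must argue at the level of subgradients rather than gradients. I would handle this by appealing to the subgradient inequality directly: for any $\bfp$ I pick a maximizing index $k = D^*(\bfp)$, and convexity of $B^*$ guarantees $B^*(\bfq)\geq B^*(\bfp) + \bfb_k\cdot(\bfq-\bfp)$ for all $\bfq$, which is exactly the statement that $\bfb_k$ is a subgradient at $\bfp$. Since Theorem~\ref{thm:savage} is stated for an arbitrary valid subgradient, any consistent tie-breaking at boundaries suffices and no strictness is claimed, matching the fact that $G = B^*$ is convex but not \emph{strictly} convex. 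The remaining verification—that $S_1$ is regular, i.e., bounded whenever $r_i > 0$—follows immediately because the $b_{ij}$ are finite constants, so the net score is always a finite entry of the bias matrix.
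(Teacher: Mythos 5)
Your proposal is correct and follows essentially the same route as the paper: the paper's own (informal) justification is precisely to set $G = B^*$, take the subgradient given by the hyperplane $\bfb_{D^*(\bfp)}$ of $E$'s optimal decision, observe that the resulting Eq.~\ref{eq:subgradient} form collapses to $S_1(\bfp,x_i) = b_{i,D^*(\bfp)}$, and invoke Thm.~\ref{thm:savage}. You merely make explicit the details the paper leaves informal (the subgradient inequality at $E$'s decision boundaries, the cancellation $G(\bfp) = G^*(\bfp)\cdot\bfp$, and regularity), all of which check out.
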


\begin{remark} We've defined the compensation function using the 
space of all decisions $D$. However, this may cause DM to 
compensate $E$ for
decisions it will never take. If we restrict attention
to those decisions in the range of DM's policy $\pi$,  then
the above characterization still applies (and will typically
reduce total compensation). In what follows, we
assume the set of decisions has been pruned to include
only those $d\in D$ for which $\pi^{-1}(d) \neq\emptyset$, and
that $E$'s utility function is defined relative to that set.
\end{remark}

Compensation function $C_1$, while proper, is
not strictly proper.  The induced net scoring function
$S_1$ is characterized by a non-strictly convex 
cost function $G$, since $G=B^*$. In particular, 
for any region $R(d)$ of
belief space where a single decision $d$ is optimal for $E$,
every report $\bfp\in R(d)$ has the same expected net score,
hence there is no ``positive'' incentive for truthtelling. 

While $C_1$ gives us one mechanism for proper scoring with
self-interested experts, we can generalize the approach to
provide a complete characterization of all proper (and strictly
proper) compensation functions. We derived $C_1$ by
compensating $E$ for its \emph{loss} due to truthful
reporting. This approach is more ``generous'' than necessary.
Rather than compensating $E$ for
its loss, we need only \emph{remove the potential gain} from misreporting.
The key component of $C_1$ is not the 
``compensation term'' $b_{i,D^*(\bfp)}$, but rather the
the penalty term $-b_{i,\pi(\bfp)}$. It is this penalty
that prevents $E$ from benefiting by changing DM's decision. Any
such gain is subtracted from
its compensation by the inclusion of 
$-b_{i,\pi(\bfp)}$. We insist only that the positive compensation
term is convex: it need bear no connection to $E$'s actual utility
function to incentivize truthfulness.\footnote{Incentive to
\emph{participate} is discussed below.}
Indeed, we can fully characterize the space of proper and
strictly proper compensation functions:
%
%
\begin{theorem}
\label{thm:characterize}
A compensation rule $C$
is proper for $E$ iff
\begin{align}
C(\bfp,x_i) = G(\bfp) - G^*(\bfp)\cdot\bfp + G^*_i(\bfp) - b_{i,\pi(\bfp)}
\label{eq:convexcompensation}
\end{align}
for some convex function $G$, and subgradient $G^*$ of $G$.
$C$ is strictly proper iff $G$ is strictly convex.
\end{theorem}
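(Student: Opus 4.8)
The plan is to reduce Theorem~\ref{thm:characterize} directly to the
classical characterization of proper scoring rules in
Theorem~\ref{thm:savage}. The key observation is that the net scoring
function $S$ defined by Eq.~\ref{eq:netscore} is itself an ordinary
scoring rule on $\Delta(X)$, and properness of the compensation rule
$C$ is \emph{defined} (Definition~\ref{defn:properC}) to be exactly
properness of $S$ in the usual sense. So the entire content of the
theorem is the claim that the stated form for $C$ is equivalent to
$S$ having the Gneiting--Raftery form. Concretely, adding the inherent
utility term $B^\pi(\bfp,x_i) = b_{i,\pi(\bfp)}$ to both sides of
Eq.~\ref{eq:convexcompensation} should make the $-b_{i,\pi(\bfp)}$
term cancel, leaving
$S(\bfp,x_i) = G(\bfp) - G^*(\bfp)\cdot\bfp + G^*_i(\bfp)$, which is
precisely Eq.~\ref{eq:subgradient}.

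First I would make this substitution explicit. Since
$B(\bfp,x_i) = b_{i,\pi(\bfp)}$, we have
$S(\bfp,x_i) = C(\bfp,x_i) + b_{i,\pi(\bfp)}$, so a compensation rule
$C$ has the form in Eq.~\ref{eq:convexcompensation} for some convex
$G$ and subgradient $G^*$ \emph{if and only if} its net score $S$ has
the form in Eq.~\ref{eq:subgradient} for the same $G$ and $G^*$. This
is a one-line algebraic equivalence with no analytic content.

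Next I would invoke Theorem~\ref{thm:savage} to close the loop. By
that theorem, $S$ is a proper scoring rule if and only if it can be
written in the form Eq.~\ref{eq:subgradient} for some convex $G$ and
subgradient $G^*$, and it is strictly proper exactly when $G$ is
strictly convex. Chaining the two equivalences, $C$ is proper (in the
sense of Definition~\ref{defn:properC}, i.e.\ $S$ is proper) iff $C$
has the form Eq.~\ref{eq:convexcompensation}, and the strict version
follows identically. I would spell out each direction: given the form
for $C$, the cancellation exhibits $S$ in subgradient form, so $S$ is
proper by Theorem~\ref{thm:savage}; conversely, given that $C$ is
proper, $S$ is proper, so Theorem~\ref{thm:savage} supplies a convex
$G$ and subgradient $G^*$ realizing $S$, and subtracting
$b_{i,\pi(\bfp)}$ recovers the claimed form for $C$.

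The only genuine subtlety, and the place I would be most careful, is
the regularity hypothesis. Theorem~\ref{thm:savage} is stated for
\emph{regular} scoring rules (payment bounded when $r_i > 0$), so I
would need to confirm that $S$ inherits regularity from $C$. Since the
inherent utility $b_{i,\pi(\bfp)}$ is drawn from a finite table of
values, $B^\pi$ is bounded, so $S = C + B^\pi$ is regular iff $C$ is;
thus the regularity assumption transfers cleanly and I would note this
explicitly rather than leaving it implicit. Beyond that, the proof is
essentially a change of variables, so there is no hard technical
obstacle---the work is entirely in Theorem~\ref{thm:savage}, which the
excerpt permits me to assume. This reflects the paper's own comment
that the $C_1$ construction ``can be seen directly'' as an instance of
Eq.~\ref{eq:subgradient} with $G = B^*$; the theorem merely upgrades
that single instance to the full family by letting $G$ range over all
convex functions.
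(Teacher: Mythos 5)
Your proposal is correct and follows essentially the same route as the paper: both proofs hinge on the cancellation of the $-b_{i,\pi(\bfp)}$ term against the inherent utility $B^\pi$, reducing propriety of $C$ to propriety of the net score $S$ and then invoking the Savage/Gneiting--Raftery characterization (Theorem~\ref{thm:savage}) in both directions. Your explicit check that $S$ inherits regularity from $C$ (since $B^\pi$ takes values in a finite table) is a small point the paper leaves implicit, but it does not change the argument.
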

\begin{proof}
Suppose $C$ is given by Eq.~\ref{eq:convexcompensation}. $E$'s
utility for a report $\bfp$ given outcome $x_i$ is given by
its net score:
\begin{align*}
S(\bfp,x_i) &= C(\bfp,x_i) + B^\pi(\bfp,x_i)\\
            &= G(\bfp) - G^*(\bfp)\cdot\bfp + G^*_i(\bfp) - b_{i,\pi(\bfp)}
               + b_{i,\pi(\bfp)}\\
            &= G(\bfp) - G^*(\bfp)\cdot\bfp + G^*_i(\bfp)
\end{align*}
Since $S$ satisfies the conditions of Thm.~\ref{thm:savage}, the
standard proof of propriety of $S$ can be used. Similarly, if $G$ is
strictly convex, $S$ is strictly proper.

Conversely, suppose $C$ is proper (so that the induced net score
satisfies $S(\bfp,\bfp) \geq S(\bfq,\bfp)$). Define $G(\bfp) = S(\bfp,\bfp)$.
If $S$ is proper in this
sense, it is easy to show that $S(\bfq,\bfp)$ is a convex
function of $\bfq$ (for fixed beliefs $\bfp$); and since 
$G(\bfp) = S(\bfp,\bfp) = \max_{\bfq} S(\bfq,\bfp)$ is
the maximum of a set of convex functions (where
the last equality holds because $C$ is proper), $G$ is itself
convex. Thm.~\ref{thm:savage} (or more precisely the method
used to prove it) ensures that, for some
subgradient $G^*$ of $G$, we have
$S(\bfp,x_i) = G(\bfp) - G^*(\bfp)\cdot\bfp + G^*_i(\bfp)$.
Hence,
\begin{align*}
C(\bfp,x_i) &= S(\bfp,x_i) - B^\pi(\bfp,x_i)\\
            &= G(\bfp) - G^*(\bfp)\cdot\bfp + G^*_i(\bfp) - b_{i,\pi(\bfp)}
\end{align*}
so $C$ has the required form.
If $C$ is strictly proper, then $G$ must be
strictly convex and Thm.~\ref{thm:savage} can again be applied.
\end{proof}
An illustration of a cost function $G(\bfp)$
that gives rise to a proper compensation function
is shown in Fig.~\ref{fig:compensation}(a).

\begin{figure*}
\centering
\includegraphics[scale=0.83]{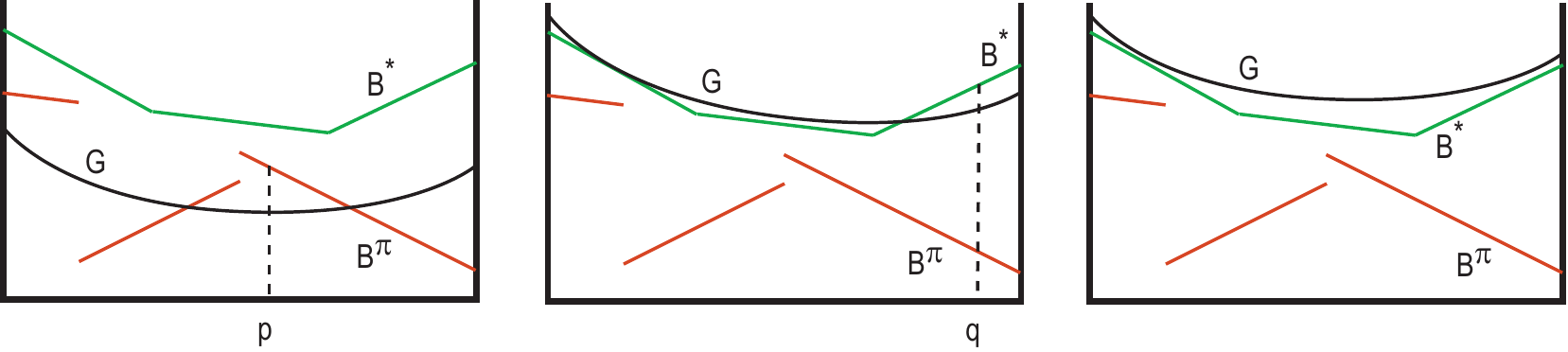}
\vskip -5mm
\textrm\small{\caption{Illustration of the cost functions $G$ that
correspond to strictly proper compensation rules. In each figure,
$E$'s optimal utility $B^*$ is the PWLC function
shown in green, and $E$'s inherent
utility $B^\pi$ is the discontinuous function in red. 
The net scoring function
$G(\bfp) = S(\bfp,\bfp)$, the convex curve, induces an
expected compensation function $C(\bfp,\bfp)$ by subtracting
$B^\pi$. (a) A strictly proper rule that violates
weak participation at point $p$. (b) A rule that satisfies
weak participation but violates strong participation at
point $q$. (c) A rule that satisfies strong participation.}
\label{fig:compensation}
}
\vskip -2mm
\end{figure*}

The characterization of Thm.~\ref{thm:characterize} ensures truthful
reporting, but may not provide incentives for participation. Indeed,
the expert may be forced to pay the DM \emph{in expectation} for
certain beliefs. Specifically, if $G(\bfp) < B^\pi(\bfp)$,
$E$'s expected compensation $C(\bfp,\bfp)$ is negative. Unless
the DM can ``force'' $E$ to participate, this will cause $E$ to
avoid providing a forecast if its beliefs are $\bfp$ (e.g., see point $p$
is Fig.~\ref{fig:compensation}(a)). In general,
we'd like to provide $E$ with non-negative expected compensation. We
can do this by insisting that the compensation rule weakly
incentives participation:
\begin{defn}
\label{defn:weakC} 
A compensation function $C$ satisfies
\emph{weak participation} iff for any beliefs $\bfp$,
$E$'s expected compensation for truthful reporting
$C(\bfp,\bfp)$ is non-negative.
\end{defn}
\noindent
(See Fig.~\ref{fig:compensation}(b) for an illustration
of a cost function $G$ that induces a compensation rule $C$
satisfying weak participation.)
\begin{theorem}
\label{thm:characterizeWeak}
A proper compensation rule $C$ satisfies
weak participation iff it meets the conditions of
Thm.~\ref{thm:characterize} and $G(\bfp) \geq B^\pi(\bfp)$
for all $\bfp\in\Delta(X)$.
\end{theorem}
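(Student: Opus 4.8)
The plan is to prove both directions of the equivalence through a single closed-form evaluation of the expected compensation $C(\bfp,\bfp)$ for truthful reporting. Since the hypothesis already grants that $C$ is proper, Theorem~\ref{thm:characterize} tells us that $C$ necessarily has the form
\[
C(\bfp,x_i) = G(\bfp) - G^*(\bfp)\cdot\bfp + G^*_i(\bfp) - b_{i,\pi(\bfp)}
\]
for some convex $G$ and subgradient $G^*$; the real work is therefore to show that weak participation is exactly equivalent to $G(\bfp)\geq B^\pi(\bfp)$ for this associated $G$.

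First I would compute $C(\bfp,\bfp)=\sum_i p_i\,C(\bfp,x_i)$ by substituting the expression above and pushing the sum inside. The terms $G(\bfp)$ and $-G^*(\bfp)\cdot\bfp$ are constant in $i$ and survive weighted by $\sum_i p_i = 1$. The crucial simplification is that $\sum_i p_i\,G^*_i(\bfp) = G^*(\bfp)\cdot\bfp$, so this term cancels the $-G^*(\bfp)\cdot\bfp$ contribution. Finally $\sum_i p_i\,b_{i,\pi(\bfp)}$ is precisely the expected inherent utility under truthful reporting, i.e.\ $B^\pi(\bfp)=\bfb_{\pi(\bfp)}\cdot\bfp$. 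Collecting terms yields the identity
\[
C(\bfp,\bfp) = G(\bfp) - B^\pi(\bfp).
\]
As a sanity check, adding $B^\pi(\bfp)$ back recovers $S(\bfp,\bfp)=G(\bfp)$, matching the converse construction in the proof of Theorem~\ref{thm:characterize}.

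With this identity in hand, both directions follow immediately. Weak participation demands $C(\bfp,\bfp)\geq 0$ for all $\bfp$, which by the identity is equivalent to $G(\bfp)\geq B^\pi(\bfp)$ for all $\bfp$. For the forward direction, a proper $C$ already meets the conditions of Theorem~\ref{thm:characterize}, and if it additionally satisfies weak participation then $G(\bfp)-B^\pi(\bfp)=C(\bfp,\bfp)\geq 0$ yields the claimed inequality. For the converse, if $C$ meets the conditions of Theorem~\ref{thm:characterize} (hence is proper) and $G(\bfp)\geq B^\pi(\bfp)$, then $C(\bfp,\bfp)=G(\bfp)-B^\pi(\bfp)\geq 0$, so weak participation holds.

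I expect no genuine obstacle; the proof is essentially the single line $C(\bfp,\bfp)=G(\bfp)-B^\pi(\bfp)$. The only points requiring care are the cancellation of the subgradient terms via $\sum_i p_i\,G^*_i(\bfp) = G^*(\bfp)\cdot\bfp$, and recognizing the weighted sum of biases as the expected inherent utility $B^\pi(\bfp)$. At belief points lying on DM's decision boundaries the subgradient $G^*(\bfp)$ is not unique, but since the identity holds for any valid subgradient, the non-negativity condition is unaffected by that choice.
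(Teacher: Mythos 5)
Your proposal is correct and follows essentially the same route as the paper: both arguments reduce to the identity $C(\bfp,\bfp)=G(\bfp)-B^\pi(\bfp)$, which the paper simply asserts as "straightforward" and which you verify explicitly via the cancellation $\sum_i p_i\,G^*_i(\bfp)=G^*(\bfp)\cdot\bfp$. Your version just fills in the computation the paper leaves implicit.
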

\begin{proof}
The proof is straightforward: if $G(\bfp) \geq B^\pi(\bfp)$
for all $\bfp$, then for any truthful report
$\bfp$ $E$'s expected compensation is 
$G(\bfp) - B^\pi(\bfp) \geq 0$. Conversely, if 
$G(\bfp) < B^\pi(\bfp)$ for some $\bfp$, then if $E$ holds
beliefs $\bfp$, a truthful report has negative expected
compensation $G(\bfp) - B^\pi(\bfp) < 0$.
\end{proof}

While weak participation seems desirable, even it is not
strong enough to ensure an expert's participation in the
mechanism in general. Suppose we define a compensation function using
some convex cost function $G(\bfp)$. If 
$E$ participates, she will maximize
her net payoff by reporting her true beliefs, say, $\bfp$.
But suppose that $G(\bfp) < B^*(\bfp)$.
While $E$ may not be certain how DM will act without its input 
(e.g., she may not know DM's ``default beliefs'' precisely), she
may nevertheless have beliefs about DM's default policy.
And, if $E$ believes DM will take decision $D^*(\bfp)$ if she
provides no forecast, then she will be better off not
participating and taking the expected payoff $B^*(\bfp)$
derived solely from her inherent utility, and forego participation
in the mechanism
(which limits her expected payoff to $G(\bfp)$). (See point $q$ in
Fig.~\ref{fig:compensation}(b).) To prevent
this we can require that $C$ \emph{strongly} incentivize
participation, by insisting no matter what $E$ believes about
DM's default policy (i.e., its action given no reporting), it 
will not sacrifice expected utility by participating
in the mechanism.
\begin{defn}
\label{defn:strongC} 
A compensation function $C$ satisfies
\emph{strong participation} iff, for any decision $d_i\in D$,
for any beliefs $\bfp$,
$E$'s net score for truthful reporting is no less
than $B_i(\bfp)$.
\end{defn}
Strong participation means that $E$ has no incentive to
abstain from participation (and need not ``take its chances'' that
DM will make a decision it likes).
This definition is equivalent to requiring that $E$'s
expected utility for truthful reporting, as a 
function of $\bfp$ is at least as great as her optimal utility function,
i.e., $S(\bfp,\bfp) \geq B^*(\bfp)$ for all $\bfp\in\Delta(X)$.
Fig.~\ref{fig:compensation}(c) illustrates such a compensation
rule. 
\begin{theorem}
\label{thm:characterizeStrong}
Proper compensation rule $C$ satisfies
strong participation iff it meets the conditions of
Thm.~\ref{thm:characterize} and 
$G(\bfp) \geq B^*(\bfp)$ for all $\bfp\in\Delta(X)$.
\end{theorem}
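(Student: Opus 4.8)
The plan is to mirror the proof of Theorem~\ref{thm:characterizeWeak}, substituting the optimal utility $B^*$ for the inherent utility $B^\pi$ throughout. The key fact I would first recall is that, by Theorem~\ref{thm:characterize}, a proper compensation rule $C$ induces a net score whose expected value under truthful reporting is exactly $S(\bfp,\bfp) = G(\bfp)$: the $-G^*(\bfp)\cdot\bfp + G^*_i(\bfp)$ terms average out under $\bfp$, and the penalty $-b_{i,\pi(\bfp)}$ is cancelled by the inherent utility $B^\pi(\bfp,x_i) = b_{i,\pi(\bfp)}$. Consequently the entire content of strong participation can be phrased purely in terms of the cost function $G$.

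Next I would establish the equivalence flagged in the text preceding the statement. Definition~\ref{defn:strongC} asks that $S(\bfp,\bfp) \geq B_i(\bfp)$ hold simultaneously for \emph{every} decision $d_i \in D$ and every $\bfp$. Since $B^*(\bfp) = \max_i B_i(\bfp)$, requiring $S(\bfp,\bfp)$ to dominate each $B_i(\bfp)$ is identical to requiring it to dominate their pointwise maximum, i.e., $S(\bfp,\bfp) \geq B^*(\bfp)$ for all $\bfp \in \Delta(X)$. This single observation---that simultaneous domination by a family of functions equals domination by their upper envelope---is really the only conceptual step, and it is immediate from the definition of $B^*$.

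With both pieces in hand, the biconditional is a direct substitution, and I would write out both directions to match the style of the weak-participation proof. For the forward direction: if $G(\bfp) \geq B^*(\bfp)$ everywhere, then $S(\bfp,\bfp) = G(\bfp) \geq B^*(\bfp) \geq B_i(\bfp)$ for every $i$, so strong participation holds. For the converse: if $G(\bfp) < B^*(\bfp)$ at some $\bfp$, then truthful reporting yields net score $G(\bfp) < B^*(\bfp) = B_{D^*(\bfp)}(\bfp)$, which violates the requirement for the particular decision $D^*(\bfp)$ attaining the maximum, so strong participation fails.

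There is no genuine obstacle here; the result is a direct corollary of Theorem~\ref{thm:characterize} together with the definition of $B^*$ as a maximum. The only subtlety worth guarding against is conflating the two participation notions: strong participation compares against $B^*$ (the best outcome the expert could hope DM would deliver on her behalf), whereas weak participation compares against $B^\pi$ (what DM actually does under its own policy $\pi$). Since $B^*(\bfp) \geq B^\pi(\bfp)$ pointwise, strong participation is indeed the stricter condition, consistent with the progression depicted in Fig.~\ref{fig:compensation}.
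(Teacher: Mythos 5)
Your proof is correct and follows essentially the same route as the paper's: both directions reduce to $S(\bfp,\bfp) = G(\bfp)$ and compare against $B^*(\bfp) = \max_i B_i(\bfp)$, with the converse witnessed by the maximizing decision $D^*(\bfp)$. Your explicit observation that domination of every $B_i$ equals domination of their upper envelope is exactly the step the paper leaves implicit when it phrases the converse in terms of $E$'s beliefs about DM's default action.
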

\begin{proof}
The proof is straightforward. Suppose $G(\bfp) \geq B^*(\bfp)$
for all $\bfp$. If $E$ holds beliefs
$\bfp$, then a truthful report has expected net score of
$G(\bfp) \geq B^*(\bfp)$, and for no beliefs about DM's default 
policy can $E$ derive higher utility by not participating.
Conversely, suppose
$G(\bfp) < B^*(\bfp)$ for some $\bfp$. If $E$ holds
beliefs $\bfp$ and also believes that DM will take action
$d = D^*(\bfp)$ if $E$ does not report, then $E$ will derive
utility $B^*(\bfp)$ by not participating, better than the
optimal expected score $G(\bfp)$ from participating.
\end{proof}

\begin{observation}
\label{obs:minimalstrong}
Compensation rule $C_1$ is the unique minimal
(non-strictly) proper rule satisfying
strong participation. That is, no compensation rule offers
lower compensation for any report without violating strong
participation.
\end{observation}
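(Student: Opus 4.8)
The plan is to reduce the claim to the two characterization theorems, after which it becomes a one-line comparison of cost functions. The key preliminary observation is that, for any proper compensation rule $C$ with associated convex cost function $G$ (as in Thm.~\ref{thm:characterize}), the expected compensation for truthful reporting depends only on $G$ and not on the choice of subgradient. Indeed, the computation in the proof of Thm.~\ref{thm:characterize} shows $S(\bfp,\bfp) = G(\bfp)$, since the $G^*(\bfp)\cdot\bfp$ and $\sum_i p_i G^*_i(\bfp)$ terms cancel; hence $C(\bfp,\bfp) = S(\bfp,\bfp) - B^\pi(\bfp) = G(\bfp) - B^\pi(\bfp)$. I would state this identity first. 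Noting that $C_1$ arises from the choice $G = B^*$ (Eq.~\ref{eq:netscoreone} gives $S_1(\bfp,x_i) = b_{i,D^*(\bfp)}$, i.e., net score $B^*$), its expected compensation is $C_1(\bfp,\bfp) = B^*(\bfp) - B^\pi(\bfp)$.

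Next I would verify that $C_1$ is admissible for the comparison: it is proper (as already noted above), and it satisfies strong participation by Thm.~\ref{thm:characterizeStrong}, since its cost function $G = B^*$ trivially meets the condition $G(\bfp) \geq B^*(\bfp)$, with equality everywhere.

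The minimality is then immediate. Let $C$ be any proper rule satisfying strong participation, with cost function $G$. By Thm.~\ref{thm:characterizeStrong}, $G(\bfp) \geq B^*(\bfp)$ for all $\bfp\in\Delta(X)$, so $C(\bfp,\bfp) = G(\bfp) - B^\pi(\bfp) \geq B^*(\bfp) - B^\pi(\bfp) = C_1(\bfp,\bfp)$. Hence no such rule offers strictly lower expected compensation at any report than $C_1$. For uniqueness, if $C$ matches $C_1$ at every report, i.e., $C(\bfp,\bfp) = C_1(\bfp,\bfp)$ for all $\bfp$, then $G(\bfp) = B^*(\bfp)$ everywhere, so the cost function---and therefore the entire expected-compensation profile---coincides with that of $C_1$.

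The one point requiring care---and the only real subtlety---is the precise sense of ``unique.'' Because two proper rules sharing the cost function $G = B^*$ can still differ in the per-outcome payments $C(\bfp,x_i)$ through different subgradient choices at $E$'s decision boundaries (the tie sets where $B^*$ is non-differentiable), uniqueness holds at the level of expected compensation $C(\bfp,\bfp)$ rather than as pointwise equality of the full function $C(\bfp,x_i)$. I would make this explicit: the minimal expected-compensation profile is unique and is realized by $C_1$, with any alternative minimizer agreeing with $C_1$ in expectation for every report. This is the natural reading of ``lower compensation for any report,'' since the quantity an expert cares about---and the quantity constrained by participation---is the expected compensation $C(\bfp,\bfp)$.
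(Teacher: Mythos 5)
Your proof is correct and follows exactly the route the paper intends: the paper states this as an unproven observation immediately after Thm.~\ref{thm:characterizeStrong}, and the identity $C(\bfp,\bfp)=G(\bfp)-B^\pi(\bfp)$ combined with the condition $G\geq B^*$ is precisely the intended justification. Your added remark that uniqueness holds at the level of expected compensation (since subgradient choices at $E$'s decision boundaries can alter the per-outcome payments $C(\bfp,x_i)$ without changing $C(\bfp,\bfp)$) is a genuine and worthwhile clarification of the statement.
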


In general, if we insist on strong participation,
DM must provide potential compensation up to the level of
$E$'s maximum utility \emph{gap}:
$$g(B) = \max_{i\leq m, j,k\leq n} b_{ik} - b_{ij}.$$
However, this degree of compensation is needed only
if DM and $E$ have ``directly conflicting'' interests (i.e.,
DM takes a decision whose realized utility is as far from optimal
as possible from $E$'s perspective). In such cases, one would expect 
$E$'s utility to be significantly less than DM's. If not,
this compensation would not be worthwhile for DM.
Conversely, if $E$'s interests are well aligned with those of DM,
the total compensation required will be small. The most extreme case
of well-aligned utility is one where functions $\pi$ and $D^*$ coincide, i.e., 
$\pi(\bfp) = D^*(\bfp)$ for all beliefs $\bfp$, in which case,
no compensation is required. Specifically,
compensation function $C_1(\bfp) = 0$ for all $\bfp$; and while
$C_1$ is not strictly proper, the only misreports that $E$ will
contemplate (i.e., that do not reduce its net score)
are those that cannot change DM's decision (i.e., cannot impact DM's
utility). As a consequence, DM should elicit forecasts from an
expert who either (a) has well-aligned interests in the decisions
being contemplated; (b) has interest whose magnitude is small
(hence requires modest compensation) relative to DM's own utility;
or (c) can be ``forced'' to make a prediction (possibly
at negative \emph{net} cost).\footnote{For instance, managers
may \emph{require} forecasts from expert employees under conditions
of negative expected cost.}

\section{Policy and Utility Uncertainty}
\label{sec:uncertain}

We now relax two key assumptions underlying our compensation
rule from Section~\ref{sec:model}: that $E$ knows DM's policy, and
that DM knows $E$'s utility function.

\subsection{Policy Uncertainty}
\label{sec:policyuncertain}


We first consider the case where DM does not want to disclose its
policy to $E$. For example, suppose DM wanted to forego a truthful
compensation rule $C$ and simply rely on a proper scoring rule of 
the usual form that ignores the $E$'s inherent utility.
Thm.~\ref{thm:characterize} shows that DM cannot prevent misreporting
in general if it ignores $E$'s inherent utility; hence it
can suffer a loss in its own utility. However, by 
refusing to disclose its policy $\pi$, DM could reduce the 
incentive for 
$E$ to misreport. Without accurate knowledge of $\pi$, $E$ would
be forced to rely on uncertain beliefs about $\pi$ to determine
the utility of a misreport, generally lowering its 
incentive. However, this will not remove the 
misreporting incentive completely.
For instance, referring to Fig.~\ref{fig:expertUtility}, suppose
DM does not disclose $\pi$. If $E$ believes 
\emph{with sufficient probability} that the
decision boundary between $d_3$ and $d_1$ is located at the
point indicated, it will
misreport any forecast $\bfp$ in region
$D_3$ sufficiently close to that boundary should DM use a 
scoring rule rather than a compensation rule. As such,
refusing to disclose its policy can be used to reduce, but
not eliminate, the incentive to misreport if DM does not want
to use a proper compensation rule.\footnote{A similar argument
shows that a stochastic policy can be used to reduce misreporting
incentive, e.g., the \emph{soft max} policy that sees DM take
decision $d_i$ with probability proportional to $e^{\lambda u_i(\bfp)}$.
\label{fn:softmax}.}

Our analysis in the previous section assumed
that $E$ used it knowledge of $\pi$ to determine the report
that maximizes her net score. However, DM does not need to
disclose $\pi$ to make good use of a compensation rule.
It can specify a 
compensation rule \emph{implicitly} by announcing its
net scoring function $S(\bfp,x_i)$ (or the
cost function $G$ and subgradient $G^*$) and
promising to deduct $B_d\cdot\bfp$ from this score for 
whatever decision $d$ it ultimately takes. $E$ \emph{need not know}
in advance what decision will be taken to be incentivized to
offer a truthful forecast. Nor does $E$ ever need to know what
decisions \emph{would have been taken} had it reported differently.
Thus the only information $E$ needs to learn about $\pi$ is the
value of $\pi(\bfp)$ at its reported forecast $\bfp$; and even
this need not be revealed until after the decision is
taken (and its outcome realized).\footnote{Some mechanism to 
\emph{verify} the
decision \emph{post hoc} may be needed in some circumstances, but this
is no different than requiring verification of the realized
outcome in standard models of scoring rules.}

\subsection{Uncertainty in Expert Utility}
\label{sec:utilityuncertain}

We now consider the more interesting issues that arise
when DM is uncertain about the parameters $\bfb$ of $E$'s
utility function. If the DM has a distribution over $\bfb$, one 
obvious technique is to specify a proper compensation rule using the 
expectation of $\bfb$. This may work reasonably well in practice,
depending on the nature of the distribution; but it follows
immediately from Thm.~\ref{thm:characterize} that this approach will
not induce truthful reporting in general. 

Rather than analyzing probabilistic beliefs, we instead suppose that
DM has constraints on $\bfb$ that define a bounded feasible region
$\calB \subseteq \bbR^{mn}$ in which $E$'s utility parameters
must lie. We will confine our analysis to a simple, but natural
class of constraints, specifically, upper and lower bounds
on each utility parameter; i.e., assume DM has upper and 
lower bounds $\uppr{b_{ij}}$ and $\lowr{b_{ij}}$, respectively,
on each $b_{ij}$. This induces a hyper-rectangular feasible region
$\calB$. If $\calB$ is a more general region (e.g., a polytope
defined by more general linear constraints), our analysis below
can be applied to the tightest ``bounding box'' of 
the feasible region.\footnote{General linear
constraints on $E$'s parameters could be
could be inferred, for example,  from observed behavior.}
Again by Thm.~\ref{thm:characterize} it is clear that
DM cannot define a proper compensation rule in general: without 
certain knowledge of $E$'s utility, any proposed ``deduction''
of inherent utility from $E$'s compensation could mistaken,
leading to an incentive
to misreport.  However, this incentive can be bounded.

Under conditions of utility uncertainty, it is
natural for DM to restrict its attention to ``consistent''
compensation rules:
\begin{defn} Let $\calB$ be the set of feasible expert
utility functions.
A compensation rule is \emph{consistent} with $\calB$ iff
it has the form, for some (strictly) convex $G$:
\begin{align}
C(\bfp,x_i) = G(\bfp) - G^*(\bfp)\cdot\bfp + G^*_i(\bfp) - \tilde{b}_{i,\pi(\bfp)}
\label{eq:consistentcompensation}
\end{align}
for some $\tilde{\bfb}\in\calB$.
\end{defn}
Notice that consistent compensation rules are naturally linear:
intuitively, we select a single consistent estimate of
each parameter $\tilde{b_{ij}}\in [\lowr{b_{ij}}, \uppr{b_{ij}}]$,
treat $E$ as if this were her true (linear) utility function,
and define $C$ using this estimate. Let's
say DM is \emph{$\delta$-certain} of $E$'s utility iff
$\uppr{b_{ij}} - \lowr{b_{ij}} \leq \delta$ for all $i,j$.
Then we can bound the incentive for $E$ to misreport as follows:
\begin{theorem}
\label{thm:consistentbound}
If DM is $\delta$-certain of $E$'s utility, then
$E$'s incentive to misreport under any consistent compensation rule
is bounded by $2\delta$. That is, $S(\bfr,\bfp) - S(\bfp,\bfp) \leq 2\delta$.
\end{theorem}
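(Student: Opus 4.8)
The plan is to bound the gap $S(\bfr,\bfp) - S(\bfp,\bfp)$ by splitting the net score $S$ into its compensation part $C$ (which is proper with respect to the \emph{estimated} utility $\tilde{\bfb}$) and the \emph{inherent} utility $B^\pi$ (which is governed by $E$'s \emph{true} utility $\bfb$). The key tension is that the compensation rule is designed as if $E$'s utility were $\tilde{\bfb}$, but $E$ actually experiences $\bfb$; the misreport incentive arises precisely from this mismatch, and since each $|\tilde b_{ij} - b_{ij}| \le \delta$, we expect each contribution to be bounded by $\delta$, with two such contributions (one at the report $\bfr$, one at the truthful report $\bfp$) giving the $2\delta$ bound.

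**First I would** define the ``estimated net score'' $\tilde S(\bfr,\bfp) = C(\bfr,\bfp) + \tilde B^\pi(\bfr,\bfp)$, where $\tilde B^\pi$ uses the estimated parameters $\tilde{\bfb}$. By construction (Eq.~\ref{eq:consistentcompensation}) the consistent compensation rule $C$ is exactly the proper compensation rule for an expert whose utility is $\tilde{\bfb}$, so by Thm.~\ref{thm:characterize}, $\tilde S$ is a genuine proper net scoring function and hence satisfies
\begin{align}
\tilde S(\bfp,\bfp) \geq \tilde S(\bfr,\bfp) \quad \text{for all } \bfr.
\label{eq:estproper}
\end{align}
The \emph{actual} net score differs from the estimated one only in the inherent-utility term: $S(\bfr,\bfp) - \tilde S(\bfr,\bfp) = B^\pi(\bfr,\bfp) - \tilde B^\pi(\bfr,\bfp)$, which equals $(\bfb_{\pi(\bfr)} - \tilde{\bfb}_{\pi(\bfr)})\cdot\bfp$. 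Since each coordinate of $\bfb_{\pi(\bfr)} - \tilde{\bfb}_{\pi(\bfr)}$ lies in an interval of width at most $\delta$ and $\bfp$ is a distribution, this difference is bounded in absolute value by $\delta$ for \emph{every} report.

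**The main step** is then a telescoping estimate. Write
\begin{align*}
S(\bfr,\bfp) - S(\bfp,\bfp) &= \big[S(\bfr,\bfp) - \tilde S(\bfr,\bfp)\big] + \big[\tilde S(\bfr,\bfp) - \tilde S(\bfp,\bfp)\big]\\
                            &\quad + \big[\tilde S(\bfp,\bfp) - S(\bfp,\bfp)\big].
\end{align*}
The middle bracket is $\leq 0$ by Eq.~\ref{eq:estproper}. The first and third brackets are each of the form $B^\pi - \tilde B^\pi$ evaluated at $\bfr$ and $\bfp$ respectively, so each is bounded in absolute value by $\delta$. Hence the total is at most $\delta + 0 + \delta = 2\delta$, which is exactly the claimed bound.

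**The part needing the most care** is verifying that $\delta$-certainty genuinely controls $|(\bfb_d - \tilde{\bfb}_d)\cdot\bfp| \le \delta$ uniformly in the decision $d$ and the belief $\bfp$: this relies on each $|b_{ij} - \tilde b_{ij}|\le\delta$ (since both lie in $[\lowr{b_{ij}}, \uppr{b_{ij}}]$ of width $\le\delta$) combined with $\sum_i p_i = 1$ and $p_i \ge 0$, so that the inner product is a convex combination of per-coordinate deviations. I would also confirm that the $\pi$ appearing in the true inherent utility and in the compensation rule is the \emph{same} policy $\pi$ (the policy is fixed; only the utility estimate is uncertain), so that the two $B^\pi$-type terms at a fixed report share the same decision $\pi(\bfr)$ and the cancellation is clean. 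No step beyond these elementary inequalities is required, so I anticipate the argument to be short once Thm.~\ref{thm:characterize} is invoked for $\tilde{\bfb}$.
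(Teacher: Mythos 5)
Your proposal is correct and is essentially the paper's own argument: the paper's chain of inequalities $S(\bfr,\bfp) \leq H_{\bfr}\cdot\bfp + \delta \leq H_{\bfp}\cdot\bfp + \delta \leq S(\bfp,\bfp) + 2\delta$ is exactly your three-term telescoping decomposition, with the middle inequality being the propriety of the estimated net score and each $\delta$ coming from the coordinate-wise gap $|b_{ij}-\tilde b_{ij}|\leq\delta$ averaged against the distribution $\bfp$. Your write-up is, if anything, cleaner in making the ``estimated net score'' $\tilde S$ explicit rather than leaving it implicit in the notation.
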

\begin{proof}
Let $\bfp$ be $E$'s actual beliefs and $\bfr$ some report.
\begin{align*}
S(\bfr,\bfp) 
  &= [G(\bfr) - \tilde{\bfb}_{\pi(\bfr)} + \bfb_{\pi(\bfr)}]\cdot\bfp\\
 &\leq G(\bfr)\cdot\bfp + \delta\\
 &\leq G(\bfp)\cdot\bfp + \delta\\
 &\leq [G(\bfp) - \tilde{\bfb}_{\pi(\bfp)} 
        + \bfb_{\pi(\bfp)} + \delta ]\cdot\bfp + \delta\\
 &\leq S(\bfp,\bfp) + 2\delta
\end{align*}
\end{proof}
Notice that the proof assumes that: (a) the estimated utility
$\tilde{\bfb}_{\pi(\bfr)}$ for the decision induced by $E$'s
report $\bfr$ underestimates her true utility by $\delta$; and
(b) the estimated utility $\tilde{\bfb}_{\pi(\bfp)}$ for the 
optimal decision overestimates $E$'s true utility by $\delta$.
We can limit the misreporting incentive further by using a
\emph{uniform} compensation rule.
\begin{defn} 
A consistent compensation rule is \emph{uniform} 
if each parameter is estimated by
$\tilde{b}_{i,\pi(\bfp)} = \lambda\lowr{b_{ij}} + (1-\lambda)\uppr{b_{ij}}$ for
some fixed $\lambda\in [0,1]$.
\end{defn}
For example, if DM uses the lower bound (or midpoint, or upper bound, etc.) of 
each parameter interval uniformly, we call its compensation rule uniform.
\begin{corollary}
\label{cor:uniformbound}
If DM is $\delta$-certain of $E$'s utility, then
$E$'s incentive to misreport under any uniform compensation rule
is bounded by $\delta$. That is, $S(\bfr,\bfp) - S(\bfp,\bfp) \leq \delta$.
\end{corollary}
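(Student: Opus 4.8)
The plan is to mirror the proof of Theorem~\ref{thm:consistentbound}, but to track the two estimation errors—one attached to the decision $\pi(\bfr)$ induced by the misreport, and one attached to the decision $\pi(\bfp)$ induced by the truthful report—more carefully, exploiting the fact that under a uniform rule every parameter is estimated at the same fractional position $\lambda$ within its interval. Let $S_G$ denote the expected score of the proper scoring rule induced by the cost function $G$ alone (i.e.\ before the inherent-utility deduction), so that $S_G(\bfp,\bfp) = G(\bfp)$.

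First I would decompose the net-score difference. Since a consistent rule deducts the \emph{estimated} inherent utility $\tilde{\bfb}_{\pi(\cdot)}$ while $E$ actually collects $\bfb_{\pi(\cdot)}$, we have $S(\bfr,\bfp) = S_G(\bfr,\bfp) + (\bfb_{\pi(\bfr)} - \tilde{\bfb}_{\pi(\bfr)})\cdot\bfp$ and $S(\bfp,\bfp) = G(\bfp) + (\bfb_{\pi(\bfp)} - \tilde{\bfb}_{\pi(\bfp)})\cdot\bfp$. Because the rule derived from $G$ is proper (Thm.~\ref{thm:savage}), $S_G(\bfr,\bfp) \leq S_G(\bfp,\bfp) = G(\bfp)$, so the proper-scoring contribution is non-positive and
\[
S(\bfr,\bfp) - S(\bfp,\bfp) \leq (\bfb_{\pi(\bfr)} - \tilde{\bfb}_{\pi(\bfr)})\cdot\bfp - (\bfb_{\pi(\bfp)} - \tilde{\bfb}_{\pi(\bfp)})\cdot\bfp.
\]

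Next I would bound each residual. For a uniform estimate $\tilde{b}_{ij} = \lambda\lowr{b_{ij}} + (1-\lambda)\uppr{b_{ij}}$ with $b_{ij}\in[\lowr{b_{ij}},\uppr{b_{ij}}]$ and interval width at most $\delta$, a direct computation gives $b_{ij} - \tilde{b}_{ij} \in [-(1-\lambda)\delta,\ \lambda\delta]$. Since $\bfp$ is a distribution, each inner product $(\bfb_d - \tilde{\bfb}_d)\cdot\bfp$ is a convex combination of these per-coordinate errors and lies in the same interval. Hence the first term is at most $\lambda\delta$ and the negated second term (where $\tilde{b}_{ij} - b_{ij}\leq (1-\lambda)\delta$) is at most $(1-\lambda)\delta$; summing yields $\lambda\delta + (1-\lambda)\delta = \delta$, which is independent of $\lambda$.

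The hard part—and the entire content of the corollary—is recognizing \emph{why} uniformity erases the factor of two. In the general consistent case of Thm.~\ref{thm:consistentbound}, $\pi(\bfr)$ and $\pi(\bfp)$ index different parameters whose estimates may sit at opposite ends of their intervals, so the error on $\pi(\bfr)$ can be $+\delta$ while the error on $\pi(\bfp)$ is independently $-\delta$, costing $2\delta$. Pinning every estimate at the common fractional position $\lambda$ couples the two: the worst-case positive error in one direction ($\lambda\delta$) and the worst-case negative error in the other ($(1-\lambda)\delta$) are forced to draw from a single budget of total width $\delta$. Making this coupling explicit is the only non-routine step; everything else is the same propriety argument used for Thm.~\ref{thm:consistentbound}.
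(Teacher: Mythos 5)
Your proof is correct and matches the paper's intended argument: the paper gives no explicit proof of this corollary, but the remark following Thm.~\ref{thm:consistentbound} identifies exactly the two error sources (underestimation at $\pi(\bfr)$, overestimation at $\pi(\bfp)$) whose coupling under a uniform rule you make precise, with the worst cases $\lambda\delta$ and $(1-\lambda)\delta$ summing to $\delta$. Your decomposition via $S_G$ is just a cleaner writing of the chain of inequalities in the paper's proof of Thm.~\ref{thm:consistentbound}, so nothing is missing.
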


While bounding the \emph{incentive} to misreport is somewhat useful, 
it is more important to understand the impact such misreporting
can have on DM. Fortunately, this
too can be bounded. The (strict) convexity of $G$ means that
the greatest incentive to misreport occurs at the decision
boundaries of DM's policy $\pi$ in Thm.~\ref{thm:consistentbound}.
Since, by definition, DM is \emph{indifferent} between the 
adjacent decisions at any decision boundary, misreports in
a bounded region around decision boundaries have limited
impact on DM's utility, as we now show. Specifically,
we show that the \emph{amount} by which $E$ will misreport
is bounded using the ``degree of convexity'' of the cost function
$G$, which in turn bounds how much loss in utility DM will realize.

\begin{defn}
Let $G$ be a convex cost function with subgradient $G^*$.  We 
say $G$ is \emph{robust relative to $G^*$ with factor $m > 0$}
iff, for all $\bfp,\bfq\in\Delta(X)$:\footnote{The definition 
of $m$-robustness can be recast using any reasonable metric, e.g.,
$L_1$-norm or KL-divergence; but the $L_2$-norm is most convenient
below when we relate robustness to strong convexity.}
\begin{align}
G(\bfq) \geq G(\bfp) + G^*(\bfp)\cdot(\bfq-\bfp) + m\enorm{\bfq-\bfp}
\label{eq:robust}
\end{align}
\end{defn}
It is not hard to see that $m$-robustness of the pair $G, G^*$
imposes a minimum ``penalty'' on any expert misreport, as a function
of its distance from her true beliefs:
\begin{observation}
\label{obs:robustloss}
Let $C$ be a proper compensation rule based on an $m$-robust cost
function $G$ and subgradient $G^*$. Let $S$ be the induced
net scoring function. Then
$$S(\bfp,\bfp) - S(\bfq,\bfp) \geq m\enorm{\bfq-\bfp}.$$
\end{observation}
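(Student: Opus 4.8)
The plan is to reduce the claim directly to the defining inequality of $m$-robustness by writing both expected net scores in closed form. Recall from Thm.~\ref{thm:characterize} that the net scoring function induced by a proper compensation rule has the form $S(\bfp,x_i) = G(\bfp) - G^*(\bfp)\cdot\bfp + G^*_i(\bfp)$. First I would compute the expected net score of an arbitrary report $\bfq$ under true beliefs $\bfp$. Taking the expectation over outcomes $x_i$ with weights $p_i$, and using both $\sum_i p_i = 1$ and $\sum_i G^*_i(\bfq)\,p_i = G^*(\bfq)\cdot\bfp$, the expression collapses to $S(\bfq,\bfp) = G(\bfq) + G^*(\bfq)\cdot(\bfp-\bfq)$. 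In particular, setting $\bfq=\bfp$ makes the two subgradient terms cancel, giving $S(\bfp,\bfp) = G(\bfp)$.

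Subtracting the two, the quantity to be bounded becomes $S(\bfp,\bfp) - S(\bfq,\bfp) = G(\bfp) - G(\bfq) - G^*(\bfq)\cdot(\bfp-\bfq)$. The key step is then to recognize that this is exactly the left-hand slack in the $m$-robustness inequality (Eq.~\ref{eq:robust}) applied with the roles of the two points interchanged: instantiating Eq.~\ref{eq:robust} with base point $\bfq$ and comparison point $\bfp$ yields $G(\bfp) \geq G(\bfq) + G^*(\bfq)\cdot(\bfp-\bfq) + m\enorm{\bfp-\bfq}$. Rearranging gives $G(\bfp) - G(\bfq) - G^*(\bfq)\cdot(\bfp-\bfq) \geq m\enorm{\bfp-\bfq}$, and since the Euclidean norm is symmetric, $\enorm{\bfp-\bfq} = \enorm{\bfq-\bfp}$, this is precisely the stated bound.

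There is essentially no hard step here; the only thing to watch is the bookkeeping in the first expansion, specifically that the term $-G^*(\bfq)\cdot\bfq$ is independent of $i$ and so passes through the expectation unchanged, while the $G^*_i(\bfq)$ terms reassemble into the inner product $G^*(\bfq)\cdot\bfp$. The one genuinely substantive observation is that robustness must be invoked at base point $\bfq$ (the misreport) rather than at $\bfp$; applying it at $\bfp$ would bound $S(\bfq,\bfp)$ in the wrong direction and fail to produce the penalty term. I would also note in passing that the argument never uses strict convexity or any property of $G$ beyond $m$-robustness itself, so the observation holds verbatim for both proper and strictly proper compensation rules.
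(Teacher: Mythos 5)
Your argument is correct: the paper leaves Observation~\ref{obs:robustloss} unproved (``it is not hard to see''), and your derivation --- writing $S(\bfq,\bfp)=G(\bfq)+G^*(\bfq)\cdot(\bfp-\bfq)$ and $S(\bfp,\bfp)=G(\bfp)$, then invoking Eq.~\ref{eq:robust} with base point $\bfq$ --- is exactly the intended one-line reduction to the robustness inequality. Your remark that robustness must be applied at the misreport $\bfq$ rather than at $\bfp$ is the right point to flag, and the bookkeeping in the expectation is handled correctly.
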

Together with Thm.~\ref{thm:consistentbound}, this gives a bound on
the degree to which an expert will misreport when an uncertain
DM uses a consistent compensation rule.
\begin{corollary}
\label{cor:robustmaxlie}
Let DM be $\delta$-certain of $E$'s utility and use a
consistent compensation rule based on an $m$-robust cost function
and subgradient. 
Let $\bfp$ be $E$'s true beliefs.
Then the report $\bfq$ that maximizes $E$'s net score
satisfies $\enorm{\bfq-\bfp} \leq \frac{2\delta}{m}$.
If the compensation rule is uniform, then
$\enorm{\bfq-\bfp} \leq \frac{\delta}{m}$.
\end{corollary}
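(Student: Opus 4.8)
The plan is to combine the incentive bound of Theorem~\ref{thm:consistentbound} with the misreport penalty guaranteed by $m$-robustness (Observation~\ref{obs:robustloss}). The one wrinkle is that these two results concern slightly different net scoring functions: Observation~\ref{obs:robustloss} is stated for a \emph{proper} compensation rule, i.e.\ the net score that would arise if DM knew $\bfb$ exactly, whereas the corollary's net score $S$ comes from a merely \emph{consistent} rule built on an estimate $\tilde{\bfb}\in\calB$. So first I would introduce the auxiliary proper net score $\hat{S}(\bfr,x_i) = G(\bfr) - G^*(\bfr)\cdot\bfr + G^*_i(\bfr)$ determined by the same $G,G^*$ that define the consistent rule. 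Since $G,G^*$ are $m$-robust, Observation~\ref{obs:robustloss} applies to $\hat{S}$ and gives $\hat{S}(\bfp,\bfp) - \hat{S}(\bfr,\bfp) \geq m\enorm{\bfr - \bfp}$ for every report $\bfr$.

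Next I would write the actual net score as a bounded perturbation of $\hat{S}$. Because the consistent rule deducts $\tilde{\bfb}_{\pi(\bfr)}$ while $E$'s inherent utility contributes the true $\bfb_{\pi(\bfr)}$, we have $S(\bfr,\bfp) = \hat{S}(\bfr,\bfp) + (\bfb_{\pi(\bfr)} - \tilde{\bfb}_{\pi(\bfr)})\cdot\bfp$. Setting $\epsilon(\bfr) = (\bfb_{\pi(\bfr)} - \tilde{\bfb}_{\pi(\bfr)})\cdot\bfp$, each coordinate of $\bfb_{\pi(\bfr)} - \tilde{\bfb}_{\pi(\bfr)}$ is at most $\delta$ in absolute value (both $b_{ij}$ and $\tilde{b}_{ij}$ lie in an interval of width $\leq\delta$), and since the weights $\bfp$ are nonnegative and sum to one, $|\epsilon(\bfr)|\leq\delta$.

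Now I would invoke optimality of $\bfq$. Since $\bfq$ maximizes $S(\cdot,\bfp)$ we have $S(\bfq,\bfp)\geq S(\bfp,\bfp)$, i.e.\ $\hat{S}(\bfq,\bfp) + \epsilon(\bfq) \geq \hat{S}(\bfp,\bfp) + \epsilon(\bfp)$. Rearranging gives $\hat{S}(\bfp,\bfp) - \hat{S}(\bfq,\bfp) \leq \epsilon(\bfq) - \epsilon(\bfp) \leq 2\delta$. Combining this with the robustness inequality applied at $\bfr=\bfq$ yields $m\enorm{\bfq - \bfp} \leq \hat{S}(\bfp,\bfp) - \hat{S}(\bfq,\bfp) \leq 2\delta$, hence $\enorm{\bfq - \bfp} \leq 2\delta/m$, which is the first claim.

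For the uniform case I would tighten the error bound exactly as Corollary~\ref{cor:uniformbound} tightens Theorem~\ref{thm:consistentbound}. With a fixed $\lambda$, a short computation shows the deviation $b_{ij} - \tilde{b}_{ij}$ lies in the asymmetric interval $[-(1-\lambda)\delta,\ \lambda\delta]$ rather than $[-\delta,\delta]$, so $\epsilon(\bfq) - \epsilon(\bfp) \leq \lambda\delta + (1-\lambda)\delta = \delta$, giving $\enorm{\bfq - \bfp}\leq \delta/m$. I expect the only real obstacle to be bookkeeping: correctly identifying that the worst case pairs an \emph{underestimate} of the induced decision's utility with an \emph{overestimate} of the truthful decision's utility (as already noted after Theorem~\ref{thm:consistentbound}), and verifying that the asymmetry of the uniform estimator collapses the $2\delta$ gap to $\delta$.
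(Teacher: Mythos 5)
Your proof is correct and follows exactly the route the paper intends: the corollary is stated without an explicit proof, as the immediate combination of the $2\delta$ (resp.\ $\delta$) incentive bound of Theorem~\ref{thm:consistentbound} with the $m\enorm{\bfq-\bfp}$ penalty of Observation~\ref{obs:robustloss}, which is precisely your argument. Your explicit decomposition of the consistent rule's net score into the proper auxiliary score $\hat S$ plus a perturbation $\epsilon(\bfr)$ with $|\epsilon(\bfr)|\leq\delta$ is a welcome tightening of a step the paper leaves implicit, but it is the same proof.
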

In other words, $E$'s utility-maximizing report must be within
a bounded distance of her true beliefs if DM uses an $m$-robust
cost function to define its compensation rule.

The notion of
$m$-robustness is a slight variant of the notion of \emph{strong convexity}
\cite{boyd:convexopt} in which we use the specific subgradient $G^*$ to
measure the ``degree of convexity.'' In the specific case of twice
differentiable cost function $G$, we say $G$ is 
\emph{strongly convex with factor $m$} iff $\nabla^2 G(\bfp) \succeq mI$
for all $\bfp\in\Delta(X)$; i.e., if the matrix $\nabla^2 G(\bfp) - mI$
is positive definite \cite{boyd:convexopt}.\footnote{Alternative definitions exist for
non-differentiable $G$, but we assume a twice differentiable $G$ when
discussing strong convexity and use robustness relative to a specific
subgradient $G^*$ for non-differentiable $G$.} 
$m$-convexity is a sufficient condition for
the robustness we seek.
\begin{corollary}
\label{cor:strongmaxlie}
Let DM be $\delta$-certain of $E$'s utility and use a
consistent compensation rule based on an $m$-convex, twice
differentiable cost function $G$. Let $\bfp$ be $E$'s true beliefs.
Then the report $\bfq$ that maximizes $E$'s net score
satisfies $\enorm{\bfq-\bfp} \leq \sqrt{\frac{4\delta}{m}}$.
If the compensation rule is uniform,
then $\enorm{\bfq-\bfp} \leq \sqrt{\frac{2\delta}{m}}$.
\end{corollary}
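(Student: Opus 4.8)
The plan is to mirror the proof of Corollary~\ref{cor:robustmaxlie}, replacing the \emph{linear} penalty supplied by $m$-robustness (Eq.~\ref{eq:robust}) with the \emph{quadratic} penalty that twice-differentiable $m$-convexity provides, and then taking a square root at the very end. Everything except the convexity estimate can be recycled verbatim from the results already in hand.

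First I would establish the strong-convexity analogue of Observation~\ref{obs:robustloss}. The standard second-order characterization states that $\nabla^2 G(\bfp) \succeq mI$ on $\Delta(X)$ implies, for all $\bfp,\bfq$,
$$G(\bfq) \geq G(\bfp) + G^*(\bfp)\cdot(\bfq-\bfp) + \frac{m}{2}\enormsq{\bfq-\bfp}.$$
Let $s(\bfr,\bfp) = G(\bfr) + G^*(\bfr)\cdot(\bfp-\bfr)$ denote the expected proper score induced by $G$ and $G^*$ via Eq.~\ref{eq:subgradient}, so that $s(\bfp,\bfp) = G(\bfp)$. Substituting the inequality above with the roles of $\bfp$ and $\bfq$ exchanged yields
$$s(\bfp,\bfp) - s(\bfq,\bfp) \geq \frac{m}{2}\enormsq{\bfq-\bfp},$$
the quadratic replacement for the linear bound $m\enorm{\bfq-\bfp}$ used in Observation~\ref{obs:robustloss}.

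Next I would reuse the bookkeeping from Theorem~\ref{thm:consistentbound}. Let $\bfq$ be the report maximizing $E$'s true net score $S$, so $S(\bfq,\bfp) \geq S(\bfp,\bfp)$. Since the consistent rule differs from the ideal proper rule only through the mis-estimated inherent utility, and each coordinate error $b_{ij}-\tilde{b}_{ij}$ lies in an interval of width at most $\delta$, one has $|S(\bfr,\bfp) - s(\bfr,\bfp)| \leq \delta$ for every report. Chaining exactly as in Theorem~\ref{thm:consistentbound} gives $s(\bfq,\bfp) \geq G(\bfp) - 2\delta$ in the consistent case (and $s(\bfq,\bfp) \geq G(\bfp) - \delta$ in the uniform case, invoking Corollary~\ref{cor:uniformbound}). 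Combining with the previous paragraph, the $G(\bfp)$ terms cancel and I obtain $\frac{m}{2}\enormsq{\bfq-\bfp} \leq 2\delta$, i.e.\ $\enorm{\bfq-\bfp} \leq \sqrt{4\delta/m}$; the uniform case replaces $2\delta$ by $\delta$ and gives $\sqrt{2\delta/m}$.

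The one step I would track most carefully—and the reason the constants differ from Corollary~\ref{cor:robustmaxlie}—is the dimensional mismatch between the two notions of ``degree of convexity.'' Robustness penalizes misreports linearly in $\enorm{\bfq-\bfp}$, whereas twice-differentiable $m$-convexity penalizes them quadratically, carrying the characteristic factor $\frac{1}{2}$. This is precisely what forces the square root and produces the constants $4$ and $2$ in place of the $2$ and $1$ of Corollary~\ref{cor:robustmaxlie}, so I would make sure the factor from $\frac{m}{2}$ survives the final cancellation before extracting the root.
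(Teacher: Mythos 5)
Your proposal is correct and follows essentially the same route as the paper: both rest on the second-order strong-convexity inequality $G(\bfq) \geq G(\bfp) + \nabla G(\bfp)\cdot(\bfq-\bfp) + \frac{m}{2}\enormsq{\bfq-\bfp}$ to lower-bound the compensation loss quadratically, and then cap the gain from misreporting at $2\delta$ (resp.\ $\delta$) via Thm.~\ref{thm:consistentbound} (resp.\ Cor.~\ref{cor:uniformbound}) before solving for $\enorm{\bfq-\bfp}$. Your version is in fact somewhat more explicit than the paper's terse argument, since you track the $\pm\delta$ utility-misestimation error on both sides of the chain rather than simply asserting the $2\delta$ gain bound.
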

\begin{proof}
$G$'s assumed differentiability ensures its gradient 
$\nabla G$ is the unique subgradient. Since $G$ is $m$-convex, we have
$$G(\bfq) \geq G(\bfp) + \nabla G^{T}(\bfp)(\bfq-\bfp) +
  \frac{m}{2}\enormsq{\bfq-\bfp}$$
for all $\bfp,\bfq\in\Delta(X)$ (see \cite{boyd:convexopt}).
Hence $E$'s loss in compensation is at least
$\frac{m}{2}\enormsq{\bfq-\bfp}$. Since its
gain in inherent utility by misreporting is bounded
by $2\delta$ (Thm.~\ref{thm:consistentbound}), setting the former
to be no greater than $2\delta$ yields the result. Since the gain
in inherent utility under a uniform compensation rule is $\delta$,
the stronger bound follows by substituting $\delta$ for $2\delta$
in the preceding argument.
\end{proof}

Robustness---and strong convexity if we use a differentiable cost
function---allow us to globally bound the maximum degree to which
$E$ will misreport. This allows us to give a simple, global bound
on the loss in DM utility that results from its uncertainty about
the expert's utility function. Recall that DM's utility function
$U_i$ for any decision $d_i$ is linear, hence has a constant
gradient $\nabla U_i$. (We abuse notation and simply write
$\nabla U_i$ for $\nabla U_i(\bfp)$.) The function $U_i - U_j$ is
also linear, given by parameter vector $(\bfu_i - \bfu_j)$.
Let $\bfe_k$ denote the $n$-dimensional unit vector with a
1 in component $k$ and
zeros elsewhere.

\begin{theorem}
\label{thm:robustmaxloss}
Let DM be $\delta$-certain of $E$'s utility and use a
consistent compensation rule based on an $m$-robust cost function
and subgradient. Assume $E$ reports to maximize her net score.
Then DM's loss in utility relative to a truthful report by $E$
is at most 
$\max_k [\bfe_k^T \max_{i,j} \nabla(U_i - U_j)] \sqrt{n}\frac{2\delta}{m}$.
If the compensation rule is uniform, then the bound is
tightened by a factor of two.
\end{theorem}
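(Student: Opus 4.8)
The plan is to express DM's loss directly in terms of the two decisions involved and then control it using the bound on how far $E$'s optimal misreport can stray from her true beliefs. Let $\bfp$ be $E$'s true beliefs and let $\bfq$ be the report that maximizes her net score. Writing $i = \pi(\bfp)$ for the decision DM takes under truthful reporting and $j = \pi(\bfq)$ for the decision it takes under the misreport, DM's realized utility is evaluated against the \emph{true} distribution $\bfp$ in both cases, so its loss is exactly $U_i(\bfp) - U_j(\bfp)$. The first task is to bound this scalar; the second is to invoke Cor.~\ref{cor:robustmaxlie}.

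For the first step I would exploit the two optimality facts that pin down $i$ and $j$, together with the linearity of each $U_i$. Since $d_i = \pi(\bfp)$ is optimal at $\bfp$ we have $U_i(\bfp) \geq U_j(\bfp)$, so the loss is non-negative; since $d_j = \pi(\bfq)$ is the decision DM selects at $\bfq$ we have $U_j(\bfq) \geq U_i(\bfq)$, i.e.\ $(U_i - U_j)(\bfq) \leq 0$. Because $U_i - U_j$ is linear with constant gradient $\grad(U_i - U_j) = \bfu_i - \bfu_j$, subtracting the non-positive quantity $(U_i - U_j)(\bfq)$ can only increase the loss:
\[
U_i(\bfp) - U_j(\bfp) \;\leq\; (U_i - U_j)(\bfp) - (U_i - U_j)(\bfq) \;=\; \grad(U_i - U_j)\cdot(\bfp - \bfq).
\]
This is the geometric heart of the argument: DM is indifferent between the competing decisions on the boundary $D_{ij}$, so its loss is governed only by the directional change of $U_i - U_j$ along the displacement $\bfp - \bfq$, not by the absolute scale of its utilities.

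It then remains to bound this inner product. I would apply H\"older's inequality to separate the gradient from the displacement, $\grad(U_i - U_j)\cdot(\bfp-\bfq) \leq \norm{\grad(U_i-U_j)}_\infty\, \onenorm{\bfp-\bfq}$, bound the worst-case gradient factor by $\max_k[\bfe_k^T \max_{i,j}\grad(U_i - U_j)]$ (the inner $\max_{i,j}$ ranges over both orderings, so its componentwise maxima are the absolute values and the bound dominates whichever pair actually arises), and convert the displacement norm via $\onenorm{\bfp-\bfq} \leq \sqrt{n}\,\enorm{\bfp-\bfq}$, which supplies the dimensional factor. Finally Cor.~\ref{cor:robustmaxlie} gives $\enorm{\bfq-\bfp} \leq \frac{2\delta}{m}$ for a consistent rule, yielding the stated bound; for a uniform rule the corollary's tighter $\enorm{\bfq-\bfp}\leq\frac{\delta}{m}$ halves it, giving the promised factor-of-two improvement.

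The main obstacle I anticipate is the first step rather than the estimates: one must argue cleanly that the loss is captured by the single linear function $U_i - U_j$ even when the straight segment from $\bfp$ to $\bfq$ crosses several of DM's decision regions. The key realization is that the loss compares only the decisions at the two endpoints, so no claim about intermediate boundaries or about adjacency of $D_i$ and $D_j$ is needed; the two optimality inequalities (at $\bfp$ and at $\bfq$) suffice on their own. A secondary bookkeeping point is verifying that the $\max_{i,j}$ appearing in the statement legitimately dominates the \emph{signed} directional derivative for the particular pair that occurs, which holds because the maximum is taken over both $(i,j)$ and $(j,i)$.
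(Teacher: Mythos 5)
Your proposal is correct and follows essentially the same route as the paper's proof: both bound the loss $U_{\pi(\bfp)}(\bfp) - U_{\pi(\bfq)}(\bfp)$ by $\grad(U_i - U_j)\cdot(\bfp-\bfq)$ using the optimality of $\pi(\bfq)$ at $\bfq$ and linearity of the $U_i$, then apply the H\"older/Cauchy--Schwarz chain $\onenorm{\bfp-\bfq} \leq \sqrt{n}\enorm{\bfp-\bfq}$ together with Cor.~\ref{cor:robustmaxlie}. Your explicit remark that the $\max_{i,j}$ must range over both orderings to dominate the signed directional derivative is a point the paper's proof leaves implicit, but the argument is the same.
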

\begin{proof}
By Cor.~\ref{cor:robustmaxlie}, $E$'s utility maximizing report 
$\bfq$ has
an $L_2$ distance at most $\frac{2\delta}{m}$ from her true beliefs
$\bfp$.
By the Cauchy-Schwartz inequality we have
$\onenorm{\bfq-\bfp} \leq \sqrt{n}\enorm{\bfq-\bfp}$, hence
bounding its max $L_1$-deviation at $\sqrt{n}\frac{2\delta}{m}$.
Then DM's loss for any (utility-maximizing) misreport is:
\begin{align*}
\bfu_{d(\bfp)}&\cdot\bfp - \bfu_{d(\bfq)}\cdot\bfp\\
  &= \bfu_{d(\bfp)}\cdot\bfq + \nabla U_{d(\bfp)}(\bfp-\bfq)  - 
     \bfu_{d(\bfq)}\cdot\bfq - \nabla U_{d(\bfq)}(\bfp-\bfq) \\
  &\leq \nabla U_{d(\bfp)}(\bfp-\bfq) - \nabla U_{d(\bfq)}(\bfp-\bfq) \\
  &\leq \nabla [U_{d(\bfp)} - U_{d(\bfq)}] (\bfp-\bfq) \\
  &\leq \max_{i,j} \nabla(U_i - U_j) (\bfp-\bfq) \\
  &\leq \max_k [\bfe_k^T \max_{i,j} \nabla(U_i - U_j)] \sqrt{n}\frac{2\delta}{m}.
\end{align*}
Here the first inequality holds by virtue of $\bfu_{d(\bfp)}\cdot\bfq
 \leq \bfu_{d(\bfq)}\cdot\bfq$ (since $\bfu_{d(\bfq)}$ is DM's optimal
decision at $\bfq$).
\end{proof}

The same proof can be adapted to strongly convex cost functions.
\begin{corollary}
\label{cor:strongmaxloss}
Let DM be $\delta$-certain of $E$'s utility and use a
linear compensation rule based on an $m$-convex, twice
differentiable cost function $G$. 
Assume $E$ reports to maximize her net score.
Then DM's loss in utility relative to a truthful report by $E$
is at most $\max_k [\bfe_k^T \max_{i,j} \nabla(U_i - U_j)]
 \sqrt{n\frac{4\delta}{m}}$.
If the compensation rule is uniform, then the bound is
tightened by a factor of two.
\end{corollary}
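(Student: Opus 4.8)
The plan is to reuse the proof of Theorem~\ref{thm:robustmaxloss} almost verbatim, substituting the sharper bound on the expert's misreport distance that strong convexity provides. The only place where robustness enters that proof is at the very beginning, where Cor.~\ref{cor:robustmaxlie} is invoked to bound $\enorm{\bfq-\bfp}$ by $\frac{2\delta}{m}$; everything downstream of that—the Cauchy--Schwartz step and the linear expansion of DM's loss around $\bfq$—is independent of how the distance bound was obtained. So first I would restate the loss chain exactly as in Theorem~\ref{thm:robustmaxloss}, stopping just before the final numerical substitution, to isolate the generic inequality
$$\bfu_{d(\bfp)}\cdot\bfp - \bfu_{d(\bfq)}\cdot\bfp \leq \max_k [\bfe_k^T \max_{i,j} \nabla(U_i - U_j)]\,\sqrt{n}\,\enorm{\bfq-\bfp}.$$

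Next I would supply the strong-convexity distance bound in place of the robustness one. By Cor.~\ref{cor:strongmaxlie}, when $G$ is $m$-convex and twice differentiable, the utility-maximizing report satisfies $\enorm{\bfq-\bfp}\leq\sqrt{\frac{4\delta}{m}}$ (and $\sqrt{\frac{2\delta}{m}}$ in the uniform case). Substituting $\sqrt{\frac{4\delta}{m}}$ for $\enorm{\bfq-\bfp}$ in the displayed inequality and folding the factor $\sqrt{n}$ under the radical gives $\sqrt{n\frac{4\delta}{m}}$, which is precisely the claimed bound. The uniform case follows identically by using $\sqrt{\frac{2\delta}{m}}$, which replaces $4\delta$ by $2\delta$ under the root and hence tightens the bound by the stated factor of $\sqrt{2}$ (matching the phrase ``tightened by a factor of two'' in the sense used elsewhere in the paper, where the constant inside the root halves).

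The main thing to be careful about—rather than a genuine obstacle—is the bookkeeping on where the square root lands: in Theorem~\ref{thm:robustmaxloss} the distance bound $\frac{2\delta}{m}$ is linear in $\delta$ and the $\sqrt{n}$ sits outside, whereas here the distance bound is itself a square root, so the cleanest presentation absorbs the external $\sqrt{n}$ into the radical to produce the compact form $\sqrt{n\frac{4\delta}{m}}$. I would make sure the Cauchy--Schwartz conversion from $\enorm{\cdot}$ to $\onenorm{\cdot}$ and the linearity argument for $U_i-U_j$ are cited as already established in the parent theorem, so the corollary proof reduces to a single substitution plus this cosmetic rearrangement. No new convexity estimate is needed, since Cor.~\ref{cor:strongmaxlie} has already converted $m$-strong-convexity into the requisite distance bound.
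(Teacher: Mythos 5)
Your proposal is correct and is exactly the adaptation the paper intends: the paper's own ``proof'' of this corollary is the single sentence that the argument of Thm.~\ref{thm:robustmaxloss} carries over, and substituting the distance bound of Cor.~\ref{cor:strongmaxlie} into that theorem's Cauchy--Schwartz/linear-expansion chain is precisely that adaptation. Your observation that the uniform case only shrinks the bound by $\sqrt{2}$ (the factor of two applying to the constant under the radical) is a legitimate and careful reading of a point the paper glosses over.
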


The results above all rely on the \emph{global} robustness or 
\emph{global} strong convexity
of the cost function $G$. Designing a specific cost function (and if
not differentiable, choosing its subgradients) can be challenging if
we try to ensure uniform $m$-robustness or $m$-convexity
across the entire probability space $\Delta(X)$.
But recall that $E$ can only impact DM's
utility if its misreport causes DM to change its decision. This means that
the cost function need only induce strong penalties for misreporting
near decision boundaries. Furthermore, the strength of these penalties
should be related to the rate at which DM's utility is negatively impacted.
For example, suppose $\bfp$ lies on the decision boundary between
region $D_i$ and $D_j$. If $|\nabla(U_i - U_j)|$ is large,
then a misreport in the region around $\bfp$ will cause a greater
loss in utility than if $|\nabla(U_i - U_j)|$ is small. This suggests
that the cost function should be more strongly convex (or more robust)
near decision boundaries whose corresponding decisions differ significantly
in utility, and can be less strong when the decisions are ``similar.''
See Fig.~\ref{fig:localconvexity} for an illustration of this point.
\begin{figure}
\centering
\includegraphics[scale=0.44]{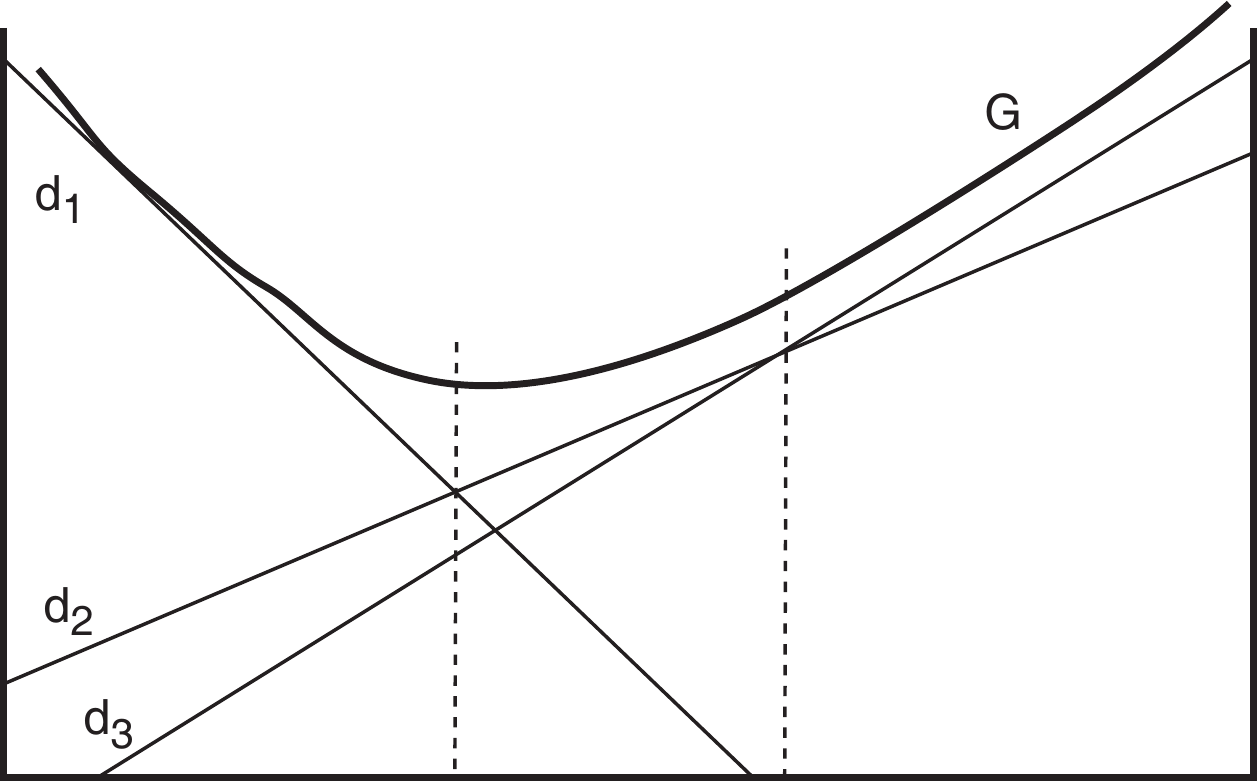}
\vskip -2mm
\textrm\small{\caption{A locally strongly convex cost function $G$.
Here $G$ has is more strongly convex in the neighborhood
of decision boundary $D_{12}$ than the boundary $D_{23}$.
This means an expert willing to sacrifice compensation (e.g.,
to gain inherent utility due to DM uncertainty) can offer a report 
that deviates more from its true beliefs in the neighborhood around
$D_{23}$ and than it can in the neighborhood around $D_{12}$ for
the same loss in compensation. However, since $d_2$ and $d_3$ are
more similar than $d_1$ and $d_2$ (w.r.t.\ DM utility), i.e.,
the gradient $|\nabla(U_2 - U_3)|$ is less than 
$|\nabla(U_1 - U_2)|$, DM will lose less utility ``per unit''
of misreport in the neighborhood of $D_{23}$. Note:
the cost function $G$ is drawn above DM's utility function
$U$ for illustration only---in general, it will lie below $U$.
}
\label{fig:localconvexity}
}
\vskip -2mm
\end{figure}
Furthermore, the cost function need only be robust or strongly convex
in a \emph{local region} around these decision boundaries. In particular,
suppose $G$ is $m$-robust in some local region around the decision
boundary between $D_i$ and $D_j$. The degree of robustness bounds
the maximum deviation from truth that $E$ will contemplate. If
the region of $m$-robustness includes these maximal deviations,
that will be sufficient to bound DM's utility loss for any true beliefs
$E$ has in that region. Outside of these regions, no misreport by $E$
will cause DM to change its decisions (relative to a truthful
report).

We can summarize this as follows:
\begin{defn}
$G$ is \emph{locally robust relative to $G^*$} 
in the $\veps$-neighborhood around $\bfp$ with factor $m > 0$
iff, for all $\bfq\in\Delta(X)$ s.t.\ $\enorm{\bfq-\bfp} \leq \veps$:
\begin{align}
G(\bfq) \geq G(\bfp) + G^*(\bfp)\cdot(\bfq-\bfp) + m\enorm{\bfq-\bfp}
\label{eq:localrobust}
\end{align}
Local strong convexity is defined similarly.
\end{defn}

Now suppose DM wishes to bound its loss due to misreporting by $E$
by some factor $\sigma > 0$. This can be accomplished using a
locally robust cost function:
\begin{theorem}
\label{thm:localrobustmaxloss}
Let DM be $\delta$-certain of $E$'s utility and fix $\sigma > 0$.
For any pair of decisions $d_i, d_j$ with non-empty decision boundary
$D_{ij}$, define
{\small
$$m_{ij} =\! \frac{\max_k (\bfe_k^T \nabla[U_i\! -\! U_j]) \sqrt{n}2\delta}{\sigma};\;\;
  \veps_{ij} =\! \frac{\sigma}{\max_k (\bfe_k^T \nabla[U_i\! -\! U_j]) \sqrt{n}}.$$
}
Let $G$ be a convex cost function with subgradient $G^*$
such that, for all $i,j$ and any $\bfp \in D_{ij}$, 
  (a) $G$ is locally robust with factor $m_{ij}$ in the 
  $\veps_{ij}$-neighborhood around $\bfp$; (b) no other decision boundary
  lies within the $\veps_{ij}$-neighborhood around $\bfp$.
Let DM use a consistent compensation rule based on $G, G^*$.
Assume $E$ reports to maximize her net score.
Then DM's loss in utility relative to a truthful report by $E$
is at most $\sigma$. If the compensation rule is uniform, the result
holds with both $m_{ij}$ and $\veps_{ij}$ decreased by a factor of two.
\end{theorem}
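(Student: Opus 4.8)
Let me understand what's being claimed. We have local robustness factors $m_{ij}$ and neighborhood radii $\veps_{ij}$ defined per decision boundary. The key relationship is that they are reciprocal (up to the common factor $\frac{\max_k(\bfe_k^T \nabla[U_i-U_j])\sqrt{n}2\delta}{\sigma}$ for $m_{ij}$ and its reciprocal scaled by $\sigma$ for $\veps_{ij}$), so that $m_{ij} \cdot \veps_{ij} = 2\delta$. That product identity is surely the crux.

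I want to prove DM's loss is at most $\sigma$. Let me think about the structure.

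The global result (Theorem 7, robustmaxloss) gave loss $\leq \max_k[\bfe_k^T \max_{i,j}\nabla(U_i-U_j)]\sqrt{n}\frac{2\delta}{m}$ using a global $m$. The idea here is to localize: replace the global $m$ with the local $m_{ij}$ valid only near $D_{ij}$, and use that the expert's misreport can only move DM's decision across a boundary if the true beliefs are within the misreport radius of that boundary.

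Let me trace the logic carefully.

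The plan is to localize the proof of Theorem~\ref{thm:robustmaxloss}, replacing the single global robustness factor by the per-boundary factor $m_{ij}$ and confining attention to the $\veps_{ij}$-neighborhood in which it is guaranteed. Throughout write $A_{ij} = \max_k[\bfe_k^T \nabla(U_i - U_j)]\sqrt{n}$, so that the defining formulas read $m_{ij} = A_{ij}\,2\delta/\sigma$ and $\veps_{ij} = \sigma/A_{ij}$; the single identity driving everything is $m_{ij}\veps_{ij} = 2\delta$, equivalently $2\delta/m_{ij} = \veps_{ij}$, which is exactly the self-consistency between the robustness strength and the maximum-deviation bound of Cor.~\ref{cor:robustmaxlie}.

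First I would reduce to the only event that matters. DM incurs no loss unless $E$'s utility-maximizing report $\bfq$ moves DM off its truthful decision, so assume $\pi(\bfp) = d_i$ and $\pi(\bfq) = d_j$ with $i \neq j$. Reproducing the chain of inequalities in the proof of Thm.~\ref{thm:robustmaxloss} verbatim (using that $d_i$ is optimal at $\bfp$, that $d_j$ is optimal at $\bfq$, and that each $U_k$ is linear, then applying H\"older and Cauchy--Schwartz), DM's loss is at most $\nabla(U_i - U_j)\cdot(\bfp-\bfq) \leq A_{ij}\enorm{\bfp-\bfq}$. It therefore suffices to establish the deviation bound $\enorm{\bfp-\bfq}\leq\veps_{ij}$, since then the loss is at most $A_{ij}\veps_{ij} = \sigma$.

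The deviation bound is the heart of the argument, and the difficulty is that robustness is now only \emph{local}. As in Cor.~\ref{cor:robustmaxlie}, a report $\bfq$ is beneficial to $E$ only if the penalty it suffers in expected compensation, which Obs.~\ref{obs:robustloss} identifies with the convex gap $S(\bfp,\bfp)-\tilde{S}(\bfq,\bfp)$, is outweighed by the at-most-$2\delta$ inherent-utility gain of Thm.~\ref{thm:consistentbound}; hence any report $E$ actually prefers satisfies that gap $< 2\delta$. The key structural fact is that for convex $G$ this gap is nondecreasing as $\bfq$ recedes from $\bfp$ along any ray (a one-line Hessian computation shows its derivative along $\bfp+t\bfv$ equals $t\,\bfv^T\nabla^2 G\,\bfv \geq 0$), so the set of preferred reports is star-shaped about $\bfp$; consequently $E$'s best decision-changing report lands on the near boundary $D_{ij}$ separating $\bfp$ from $D_j$. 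Now fix the crossing point $\bfp'\in D_{ij}$: by hypothesis~(b) the $\veps_{ij}$-ball about $\bfp'$ meets no other boundary, and by hypothesis~(a) $G$ is $m_{ij}$-robust there. If $\bfp$ lies within $\veps_{ij}$ of $D_{ij}$, the entire relevant region is $m_{ij}$-robust and the self-consistent estimate $\enorm{\bfp-\bfp'}\leq 2\delta/m_{ij}=\veps_{ij}$ of Cor.~\ref{cor:robustmaxlie} applies directly; if instead $\bfp$ lies farther than $\veps_{ij}$ from $D_{ij}$, then any decision-changing report must traverse the full robust collar around $\bfp'$, and the lever-arm of that collar (amplified by the ray-monotonicity through the flat interior) already drives the gap past $m_{ij}\veps_{ij}=2\delta$, so no such report is preferred at all. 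Either way $\enorm{\bfp-\bfq}\leq\veps_{ij}$.

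I expect the main obstacle to be precisely this last step: reconciling the center at which local robustness is assumed (the boundary point $\bfp'$) with the center at which the compensation penalty is anchored (the true belief $\bfp$), since the two convex gaps are based at different points and local robustness a priori says nothing about the behaviour of $G$ far from any boundary. The convexity/ray-monotonicity observation is what bridges them, ruling out the otherwise-dangerous scenario in which $G$ is flat deep inside a decision region. The uniform case is identical, except that the inherent-utility gain is bounded by $\delta$ rather than $2\delta$ via Cor.~\ref{cor:uniformbound}; propagating this through the same argument yields the stated factor-of-two tightening of $m_{ij}$ and $\veps_{ij}$.
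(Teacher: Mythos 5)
Your proposal follows essentially the same route as the paper's proof: the same case split on whether the true belief $\bfp$ lies within $\veps_{ij}$ of the relevant boundary, the same identity $m_{ij}\veps_{ij}=2\delta$ matching the robust collar's penalty against the $2\delta$ misreporting gain of Thm.~\ref{thm:consistentbound}, and the same bridge between the boundary-anchored robustness and the $\bfp$-anchored compensation penalty (your ``ray-monotonicity'' is exactly the paper's telescoping decomposition $(H_\bfp-H_{\bfp'})+(H_{\bfp'}-H_\bfq)+(H_\bfq-H_\bfr)$ with the outer terms handled by propriety/convexity and collinearity). The only cosmetic difference is that you justify the star-shapedness via a Hessian computation, which needlessly assumes twice-differentiability where plain convexity of $G$ suffices.
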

\begin{proof}
(Sketch). The proof proceeds by cases involving the location of
$E$'s true beliefs $\bfp$ and the location of possible utility-maximizing
misreports $\bfr$. W.l.o.g., assume that $\bfp$ is in decision
region $D_i$. We consider four classes of misreports.

(A) Suppose $\bfr \in D_i$. In this case, DM's utility loss is
zero since the decision is the same as if $E$ had reported truthfully.

(B) Now consider the case where $\bfp$ is in
the $\veps_{ij}$-neighborhood of some decision boundary $D_{ij}$.
We show that any report $\bfr \in D_j$ satisfies the condition
of the theorem. Let $\bfq \in D_j$ be
an arbitrary point s.t.\ $\enorm{\bfq-\bfp} \leq \veps_{ij}$ 
(this must exist by the assumption that $\bfp$ lies in the
$\veps_{ij}$-neighborhood of $D_{ij}$). 
DM's utility loss for reporting $\bfq$ is then bounded as follows:
\begin{align*}
\bfu_i\cdot\bfp &- \bfu_j\cdot\bfp\\
  &= \bfu_i\cdot\bfq + \nabla U_i(\bfp-\bfq)  - 
     \bfu_j\cdot\bfq + \nabla U_j(\bfp-\bfq) \\
  &\leq \nabla [U_i - U_j] (\bfp-\bfq) \\
  &\leq \max_k [\bfe_k^T \nabla(U_i - U_j)]\onenorm{\bfp-\bfq}\\
  &\leq \max_k [\bfe_k^T \nabla(U_i - U_j)]
            \sqrt{n}\enorm{\bfp-\bfq}\\
  &\leq \max_k [\bfe_k^T \nabla(U_i - U_j)]
            \sqrt{n}\veps_{ij}\\
  &= \sigma.
\end{align*}
If $\bfr\in D_j$, then it must be such a $\bfq$ (i.e.,
be within $\veps_{ij}$ of $\bfp$), since any report in $D_j$ has the same
inherent utility, while those closest to $\bfp$ maximize compensation.
Hence utility loss for $\bfr$ is no greater than $\sigma$.
Note that $\bfp$ may lie within the $\veps_{ij}$ neighborhood of multiple
decision boundaries $D_{ij}$ adjacent to $D_i$, but the argument holds for
any report in any such region $D_j$.

(C) Now consider the case where boundary $D_{ij}$ exists,
but $\bfp$ does not lie within the
$\veps_{ij}$-neighborhood of $D_{ij}$. We show that 
$E$'s utility maximizing report cannot be in $D_j$. By way of contradiction,
consider a report $\bfr\in D_j$. Let $\ell$ be the closed line segment
$\{(1-\lambda)\bfp + \lambda\bfr : \lambda\in [0,1]\}$; and 
let $\bfq\in D_{ij}$ be
the point where $\ell$ intersects the decision boundary, and let
$\bfp'\in D_i$ be the point on $\ell$ on the ``$D_i$ side'' of
the boundary that is distance $\veps_{ij}$ from the boundary.
$E$'s loss in net score (ignoring any error due inherent utility
misestimate by DM) is given by:
\begin{align*}
S(\bfp,&\bfp) - S(\bfr,\bfp)\\
  &= H_\bfp\cdot\bfp - H_\bfr\cdot\bfp\\
  &= (H_\bfp\!\cdot\!\bfp - H_{\bfp'}\!\cdot\!\bfp)
     + (H_{\bfp'}\!\cdot\!\bfp - H_\bfq\!\cdot\!\bfp)
     + (H_\bfq\!\cdot\!\bfp - H_\bfr\!\cdot\!\bfp)
\end{align*}
We have $(H_\bfp\cdot\bfp - H_{\bfp'}\cdot\bfp) \geq 0$ by
the propriety of the compensation rule (ignoring error due to
inherent utility misestimation). We also have
\begin{align*}
H_{\bfp'}\cdot\bfp &- H_\bfq\cdot\bfp\\
 &= H_{\bfp'}\cdot(\bfp -\bfp' +\bfp') - H_\bfq\cdot(\bfp -\bfp' +\bfp')\\
 &= H_{\bfp'}\cdot\bfp' - H_\bfq\cdot\bfp'
     + H_{\bfp'}\cdot(\bfp -\bfp') - H_\bfq\cdot(\bfp -\bfp')\\
 &\geq m_{ij}\veps_{ij}
     + H_{\bfp'}\cdot(\bfp -\bfp') - H_\bfq\cdot(\bfp -\bfp')\\
 &\geq m_{ij}\veps_{ij}\\
 &\geq 2\delta
\end{align*}
where the first inequality holds due to the local robustness of
$G$ and the second due to the convexity of $G$ and the
collinearity of $(\bfp,\bfp',\bfq)$.
Finally, we must have $(H_\bfq\cdot\bfp - H_\bfr\cdot\bfp) \geq 0$
again due to the convexity of $G$ and the collinearity of $(\bfp,\bfq,\bfr)$.
Thus $E$'s loss in compensation due to misreporting is at least
$2\delta$ (and is strictly greater if $G$ is strictly convex). But 
by Thm.~\ref{thm:consistentbound} its gain in inherent utility by
misreporting can be no greater than $2\delta$. Hence its optimal
report $\bfr$ cannot lie in $D_j$.

(D) The preceding argument can be adapted in a straightforward way to
the case where $D_i$ and $D_j$ are not adjacent (i.e., $D_{ij}$ is
empty).
\end{proof}
This result can be generalized to the case where the degree of robustness
around one decision boundary is relaxed sufficiently so that the
neighborhood within which $E$ can profitably misreport crosses
more than one decision boundary (i.e., when 
another decision boundary overlaps the $\veps_{ij}$-neighborhood
around $D_{ij}$).  Utility loss 
will increase but is can be bounded by considering the maximum 
gradient $\nabla(U_i - U_j)$ over decisions that can be swapped.
The result can also be adapted to locally strongly convex cost functions in
the obvious way.
\begin{corollary}
\label{cor:localstrongmaxloss}
Let DM be $\delta$-certain of $E$'s utility and fix $\sigma > 0$.
For any pair of decisions $d_i, d_j$ with non-empty decision boundary
$D_{ij}$, define
\begin{small}
$$m_{ij} =\! \frac{\max_k (\bfe_k^T \nabla[U_i \!- \! U_j]) \sqrt{n}2\delta}{\sigma};\;\;
  \veps_{ij} =\! \frac{\sigma}{\max_k (\bfe_k^T \nabla[U_i\! -\! U_j]) \sqrt{n}}.$$
\end{small}
Let $G$ be a convex cost function
such that, for all $i,j$ and any $\bfp \in D_{ij}$, 
  (a) $G$ is locally convex with factor $m_{ij}$ in the 
  $\veps_{ij}$-neighborhood around $\bfp$; (b) no other decision boundary
  lies within the $\veps_{ij}$-neighborhood around $\bfp$.
Let DM use a consistent compensation rule based on $G, G^*$.
Assume $E$ reports to maximize her net score.
Then DM's loss in utility relative to a truthful report by $E$
is at most $\sigma$.
\end{corollary}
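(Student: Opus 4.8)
The plan is to follow the four-case argument of the proof of Theorem~\ref{thm:localrobustmaxloss} essentially verbatim, changing only the single step where local robustness is invoked. Fixing $E$'s true beliefs $\bfp$ in some region $D_i$, I would again split on the location of the utility-maximizing report $\bfr$. When $\bfr \in D_i$ (case A) the induced decision is unchanged and DM's loss is zero. When $\bfp$ lies within the $\veps_{ij}$-neighborhood of a boundary $D_{ij}$ and $\bfr \in D_j$ (case B), the bound $\bfu_i\cdot\bfp - \bfu_j\cdot\bfp \leq \max_k(\bfe_k^T \nabla[U_i - U_j])\sqrt{n}\,\veps_{ij} = \sigma$ is a purely geometric consequence of the linearity of each $U_i$ together with the fact that an optimal report in $D_j$ stays within $\veps_{ij}$ of $\bfp$ (inherent utility being constant on $D_j$, $E$ maximizes compensation by reporting as close to $\bfp$ as possible). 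This step never uses the \emph{form} of the convexity penalty, so it transfers unchanged, as does case D (non-adjacent regions).

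The only place that must be re-derived is case C, where $\bfp$ lies outside the $\veps_{ij}$-neighborhood and one must show $E$ cannot profitably cross into $D_j$. I would reuse the identical collinear construction: let $\ell$ be the segment from $\bfp$ to $\bfr$, let $\bfq \in D_{ij}$ be its intersection with the boundary, and let $\bfp' \in D_i$ be the point of $\ell$ at distance $\veps_{ij}$ from $\bfq$, and decompose
\[
S(\bfp,\bfp) - S(\bfr,\bfp) = (H_\bfp\!\cdot\!\bfp - H_{\bfp'}\!\cdot\!\bfp) + (H_{\bfp'}\!\cdot\!\bfp - H_\bfq\!\cdot\!\bfp) + (H_\bfq\!\cdot\!\bfp - H_\bfr\!\cdot\!\bfp).
\]
The first bracket is nonnegative by propriety and the third by convexity and collinearity of $(\bfp,\bfq,\bfr)$, exactly as before. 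For the middle bracket I would write it as $(H_{\bfp'}\!\cdot\!\bfp' - H_\bfq\!\cdot\!\bfp') + (H_{\bfp'}\!\cdot\!(\bfp-\bfp') - H_\bfq\!\cdot\!(\bfp-\bfp'))$, the second summand being nonnegative by convexity and collinearity of $(\bfp,\bfp',\bfq)$. The difference from the robust proof is in the first summand: instead of Observation~\ref{obs:robustloss} I would invoke the strong-convexity inequality of Corollary~\ref{cor:strongmaxlie}, centered at the boundary point $\bfq$, namely $G(\bfp') \geq G(\bfq) + \nabla G(\bfq)^{T}(\bfp'-\bfq) + \frac{m_{ij}}{2}\enormsq{\bfp'-\bfq}$, which yields $H_{\bfp'}\!\cdot\!\bfp' - H_\bfq\!\cdot\!\bfp' = S(\bfp',\bfp') - S(\bfq,\bfp') \geq \frac{m_{ij}}{2}\veps_{ij}^2$. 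Once the local factor is large enough that $\frac{m_{ij}}{2}\veps_{ij}^2 \geq 2\delta$, the compensation loss for crossing exceeds the maximum inherent-utility gain of $2\delta$ from Theorem~\ref{thm:consistentbound}, so no $\bfr \in D_j$ can be optimal and DM's loss remains at most $\sigma$.

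The main obstacle I anticipate is precisely this constant. Strong convexity supplies a penalty \emph{quadratic} in the displacement, whereas robustness supplied a \emph{linear} one, so the clean identity $m_{ij}\veps_{ij} = 2\delta$ that drives Theorem~\ref{thm:localrobustmaxloss} must be replaced by the requirement $\frac{m_{ij}}{2}\veps_{ij}^2 \geq 2\delta$. This is exactly the rebalancing that turned the deviation bound $\frac{2\delta}{m}$ of Corollary~\ref{cor:robustmaxlie} into $\sqrt{\frac{4\delta}{m}}$ in Corollary~\ref{cor:strongmaxlie}: to make case C go through one should read $m_{ij}$ as $\frac{4\delta}{\veps_{ij}^2}$ (equivalently, ensure the $\veps_{ij}$-neighborhood contains the maximal deviation $\sqrt{4\delta/m_{ij}}$ guaranteed by Corollary~\ref{cor:strongmaxlie}) rather than the value appropriate to the linear penalty. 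With the factor matched in this way, every remaining inequality is identical to the robust proof, and the uniform-rule improvement follows by the usual substitution of $\delta$ for $2\delta$ throughout.
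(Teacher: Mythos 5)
Your proposal follows exactly the route the paper intends: the paper gives no separate proof of this corollary, stating only that Thm.~\ref{thm:localrobustmaxloss} ``can also be adapted to locally strongly convex cost functions in the obvious way,'' and your case-by-case reuse of that proof (cases A, B, D unchanged; case C re-derived with the strong-convexity penalty in the middle bracket) is precisely that adaptation. The one substantive point you raise is correct and worth emphasizing: with the constants as literally stated, $m_{ij}\veps_{ij} = 2\delta$, so the quadratic penalty delivers only $\tfrac{m_{ij}}{2}\enormsq{\bfp'-\bfq} = \delta\,\veps_{ij}$ at the edge of the neighborhood, which is $\geq 2\delta$ only if $\veps_{ij}\geq 2$ --- impossible in a nontrivial neighborhood of the simplex, where $L_2$ distances never exceed $\sqrt{2}$. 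Hence case C genuinely fails for the corollary as written, and your rescaling $m_{ij} = 4\delta/\veps_{ij}^2$ (equivalently, requiring the $\veps_{ij}$-neighborhood to contain the maximal deviation $\sqrt{4\delta/m_{ij}}$ from Cor.~\ref{cor:strongmaxlie}) is the correct repair, mirroring how the deviation bound $2\delta/m$ of Cor.~\ref{cor:robustmaxlie} became $\sqrt{4\delta/m}$ in the strongly convex setting. Your proof is therefore sound modulo this restatement of $m_{ij}$; the only cosmetic slip is the closing remark about the uniform-rule improvement, which the theorem includes but this corollary's statement does not.
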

These results quantify the ``cost'' to the decision maker of its
imprecise knowledge of the expert's utility function, i.e.,
its worst-case expected utility relative to what it could have
achieved if it had full knowledge of $E$'s utility (i.e.,
with truthful reporting by $E$).

\begin{remark} 
If we relax the constraint that DM choose the decision
$d_i$ with maximum expected utility, we can exploit local
robustness to induce truthful forecasts. Suppose DM uses
the softmax decision policy (see footnote~\ref{fn:softmax}): this stochastic
policy makes $E$'s utility $B^\pi(\bfr,\bfp)$ continuous in its
report $\bfr$. An analysis similar to that above, using local 
robustness or strong convexity of the cost function, allows DM
to induce truthtelling as long as the degree of convexity
compensates for the gradient of $B^\pi$ at decision boundaries.
Since adding randomness to the policy removes the 
discontinuities in $B^\pi$, this is
now possible. 
Of course, this ``incentive-compatibility'' comes at a cost:
the DM is committed to taking suboptimal actions with
some probability. We defer a full analysis of the tradeoffs,
and the relative benefits of ``acting optimally'' but risking
misleading reports vs.\ ``acting suboptimally'' relative to
truthful report, to a longer version of this paper.
\end{remark}

The characterization of DM loss using local robustness or local
strong convexity not only offers theoretical guarantees on
DM utility---it has potential operation significance in the design
of compensation rules. Specifically, it suggests an optimization procedure 
for designing a cost function $G$---from which the induced compensation
rule $C$ is recovered---so as to minimize DM utility loss. Intuitively,
the design of $G$ will attempt to optimize two conflicting objectives:
minimizing the bound $\sigma$ on utility loss, which generally requires
\emph{increasing} the degree of robustness or convexity of $G$ at decision
boundaries; and minimizing expected compensation $c$ which, given the
requirement of strict convexity of $G$, generally requires \emph{decreasing}
robustness or convexity. This tension can be addressed by either:
(a) explicitly trading $\sigma$ and $c$ off against each other in the 
design objective; (b)
minimizing $c$ subject to a target bound $\sigma$; or (c) minimizing
$\sigma$ subject to a target compensation level $c$. The optimization
itself is defined over the space of $n$-dimensional convex curves $G$,
and could be treated as an $n$-dimensional spline problem.
The objective is to fit a convex function to
a set of points with specific local curvature constraints that
enforce a certain degree of local convexity at particular
decision boundaries. Specific classes of
spline functions (e.g., Catmull-Rom splines) might prove useful for
this purpose.  We leave to future research the question of the
practical design of cost and compensation functions under conditions of utility
uncertainty.

\section{Market Scoring Rules}
\label{sec:msr}

Space precludes a comprehensive treatment, but we provide a brief
sketch of how one might exploit compensation functions
in settings where DM aggregates the forecasts of multiple experts.
One natural means
of doing so is to develop a \emph{market scoring rule (MSR)}
\cite{hanson-scoringrules:isf03,hanson-scoringrules:jpm07}
that sequentially applies a standard scoring rule based on
how an expert alters the prior forecast (see Sec.~\ref{sec:background}). 
The typical means of creating an MSR given a scoring
rule $S$ is to have the $k$th expert (implicitly) pay the 
$k{\! -\!}1$st expert for its forecast according to $S$, and have the 
principal pay only final expert for its forecast using $S$. In this
way, the principal's total payment is bounded by the maximal
possible payment to a single expert \cite{hanson-scoringrules:jpm07}.

When one attempts this with self-interested experts,
difficulties emerge. For instance, Shi et al.~\cite{Shi-Conitzer:wine09}
show that experts who can alter the outcome 
distribution after making a forecast, \emph{each} require compensation 
to prevent them from manipulating the distribution in ways that
are detrimental to the principal.\footnote{Shi 
et al.~\cite{Shi-Conitzer:wine09} actually use a one-round variant of 
an MSR.} A related form of \emph{subsidy} arises in our decision setting.

Following~\cite{Shi-Conitzer:wine09}, 
we assume a collection
of $n$ experts, each of whom can provide alter the forecast $\bfp$
exactly once. 
Suppose the experts have an interest in DM's
decision. An ``obvious'' MSR in our model would
simply adopt a proper compensation rule, and have each expert pay
the either the \emph{compensation} or the \emph{net score}
due to the expert who provided the incumbent forecast,
and receive her payment from the next expert. If we use
compensation, we run into strategic issues. With
a proper compensation rule, an expert $k$
reports truthfully based on her
\emph{net score (total utility)}, consisting of both compensation
and the inherent utility of the decision she induces. In a
market setting, $k$'s proposed decision may be \emph{changed} by the next
expert that provides a forecast. This
(depending on her beliefs about other expert opinions) may
incentivize $k$ to misreport in order to \emph{maximize her
compensation} rather than her net score.  
Overcoming such strategic issues seems challenging.

Alternatively, each expert might pay the net score due her
predecessor. Unfortunately, an arbitrary proper
compensation rule may not pay expert $k$ enough
score to ``cover her costs'' (e.g., if $k{\! -\!}1$'s inherent utility is 
much higher than $k$'s). However, if we set aside issues
associated with incentive for participation for the moment,
the usual MSR approach can be adapted as follows:
we fix a \emph{single
(strictly) convex cost function $G$ for all experts}, and define 
the compensation rule $C^k$ for expert $k$ using $G$ in
the usual way:
$$C^k(\bfp,x_i) = G(\bfp) - G^*(\bfp)\cdot\bfp 
   + G^*_i(\bfp) - b^k_{i,\pi(\bfp)},$$
where $\bfb^k$ is $k$'s utility function (bias). If
$G$ satisfies strong participation for all experts
(i.e., if $G(x_i) \geq B^*(x_i)$ for all $i$), then any
expert $k$ whose beliefs $\bfp[k]$ differ from the forecast $\bfp[k{\! -\!}1]$
provided by $k{\! -\!}1$ will have an expected net score (given
$\bfp[k]$) greater than her expected payment to $k{\! -\!}1$ and
will maximize her utility by providing a truthful forecast.
In particular, let's denote $k$'s expected payment to $k{\! -\!}1$
by $\rho(k, k{\! -\!}1)$; then we have:
\begin{align*}
\rho(k, k{\! -\!}1) &= 
 (H_{\bfp[k{\! -\!}1]} - \bfb^{k{\! -\!}1}_{\pi(\bfp[k{\! -\!}1])})\cdot\bfp[k] 
 \, +  \, \bfb^{k{\! -\!}1}_{\pi(\bfp[k{\! -\!}1])}\cdot\bfp[k] \\
 &= H_{\bfp[k{\! -\!}1]} \cdot\bfp[k] \\
 &\leq H_{\bfp[k]} \cdot\bfp[k].
\end{align*}
Hence $k$'s expected payment $\rho(k, k{\! -\!}1)$ is
less than its expected net utility, leaving it with a
(positive) net gain of $(H_{\bfp[k]} - H_{\bfp[k{\! -\!}1]}) \cdot\bfp[k]$.
However, this gain may be smaller than the inherent
utility she derives from the decision induced by $k{\! -\!}1$,
namely, $\bfb^{k}_{\pi(\bfp[k{\! -\!}1])}\cdot\bfp[k]$.
Hence this scheme may not incentivize participation. In
cases where DM can force participation, such a scheme
can be used; but in general, the self-subsidizing nature of
standard MSRs cannot be exploited with self-interested 
experts.\footnote{If expert utility is small
relative to overall compensation, we can exploit the
strong robustness (or strong convexity) of the cost function
to show that experts will abstain from offering
predictions only if their beliefs are sufficiently close
to the incumbent prediction. Providing the 
degree of compensation induced by an ``extremely convex''
cost function can, of course, be interpreted as a form of
subsidy.}

To incentivize participation, 
DM can subsidize these payments. In the most extreme case, 
DM simply pays each displaced expert its net utility, which removes any
incentives to misreport, but at potentially high cost. In
certain circumstances, we can reduce the DM subsidy to the
market by having it pay 
only the inherent utility $b^{k{\! -\!}1}_{i,\pi(\bfp[k{\! -\!}1])}$
(given realized outcome $x_i$)
of the displaced expert $k{\! -\!}1$, and 
requiring the displacing expert $k$ to pay the compensation
$H_{i,\bfp[k{\! -\!}1]}$. Under certain conditions on
the relative utility of different experts for different decisions, this 
is sufficient to induce participation; that is, $k$'s net gain for
partipating exceeds her inherent utility for the incumbent decision.

For instance, suppose all experts
have the same utility function $\bfb$ (e.g., consider experts
in the same division of a company who are asked to predict the
outcome of some event, and have different estimates, but have
aligned interests in other respects). In this case, $k$'s net gain for
reporting her true beliefs is:
\begin{align*}
(H_{\bfp[k]} 
   &- ( H_{\bfp[k{\! -\!}1]} - \bfb_{\pi(\bfp[k{\! -\!}1])} ))\cdot\bfp[k] \\
  &= (H_{\bfp[k]} - H_{\bfp[k{\! -\!}1]})\cdot\bfp[k] 
     + \bfb_{\pi(\bfp[k{\! -\!}1])}\cdot\bfp[k] \\
  &\geq \bfb_{\pi(\bfp[k{\! -\!}1])}\cdot\bfp[k].
\end{align*}
Hence $k$'s expected net gain is at least as great as her inherent expected
utility for the decision induced by $k{\! -\!}1$, and strictly greater
if her beliefs differ from those of $k{\! -\!}1$. Thus
participation is assured.

Indeed, the argument holds even if the utility functions are not
identical: we require only that $k$'s expected utility for the
decision it displaces is less than the expected utility (given $k$'s
beliefs) to be offered to her predecessor $k{\! -\!}1$. A sufficient condition
for this is that $\bfb^k \leq \bfb^{k{\! -\!}1}$ (pointwise). This
suggests that if the DM can elicit predictions of its experts
in a particular order, it should do so by eliciting forecasts of those 
with the greatest utility first.

In general, even in the extreme case of identical expert utility
functions, there seems to be no escape from the requirement that
DM subsidize the market, at a level that grows linearly with
the number of agents. This is very similar to the conclusions
drawn by Shi et al.~\cite{Shi-Conitzer:wine09}. We provide a more
detailed formalization and analysis in an extended version
of the paper.

\section{Concluding Remarks}
\label{sec:conclude}

We have presented a model that allows the analysis of the incentives
facing experts in a decision-making context who have a vested interest in
the decision taken by the principal. We have developed a class of
\emph{compensation rules} that are necessary and sufficient to induce
truthful forecasts from self-interested experts and also characterized
the subclasses of such rules that satisfy weak and strong participation
constraints. While vanilla compensation rules assume knowledge of
the expert's utility function on the part of the principal, we've
also shown how to design compensation rules when the principal has
only rough bounds on expert utility parameters in such a way that
(a) the incentive for the expert to misreport is bounded; and
(b) the impact on the principal's decision/utility is similarly
bounded relative to the case of full knowledge. These bounds are
derived from the robustness or strong convexity of the cost function, 
in either a global or a local sense.

A number of other interesting directions remain. One is the
development of computationally effective procedures to design cost 
functions that minimize principal utility loss in settings where
expert utility is not fully known, without inducing extreme degrees
of compensation. 
Another direction is the analysis of the
tradeoff between the principal's decision space and the payments
required to induce truthfulness or participation on the part of
experts. We discussed above the possibility that by acting
(somewhat) suboptimally through the use of a stochastic policy,
the principal could diminish the incentive for the expert to
misreport. We can take this a step further: intuitively, by limiting 
its policy to use only a \emph{subset} of its potential decisions, 
the principal may dramatically reduce
the potential for strategic behavior---either misreporting or failure
to participate---on the part of an expert, while
at the same time, doing little damage its own utility by restricting
its policy in this way.\footnote{Thanks to Tuomas Sandholm for suggesting
this and preliminary discussion in this direction.} 

Finally, the question of joint elicitation both the utility function and 
the forecast of an expert remains intriguing. This can be viewed as
a mechanism design problem where the expert's type consists of
both its preferences and its ``information'' or forecast (see
\cite{Conitzer:uai09} for a treatment of forecasts themselves from a
mechanism design perspective). Preliminary
results (joint with Tuomas Sandholm) suggest, not surprisingly,
that truthful elicitation is not generally possible when the
principal takes a decision that maximizes its expected utility. However, 
it remains to be seen if effective mechanisms can be designed that
offer ``reasonable'' performance from the perspective of the
principal.

\subsection*{Acknowledgements} 
Thanks to Yiling Chen, Vince Conitzer, Ian Kash, 
Tuomas Sandholm and Jenn Wortman Vaughan
for helpful discussions and suggestions. This work 
was supported by NSERC.

\bibliographystyle{plain}

\begin{thebibliography}{10}

\bibitem{boyd:convexopt}
S.~Boyd and L.~Vandenberghe.
\newblock {\em Convex Optimization}.
\newblock Cambridge University Press, Cambridge, 2004.

\bibitem{brier:1950}
G.~W. Brier.
\newblock Verification of forecasts expressed in terms of probability.
\newblock {\em Monthly Weather Review}, 78(1):1--3, 1950.

\bibitem{chen-kash:aamas11}
Y.~Chen and I.~Kash.
\newblock Information elicitation for decision making.
\newblock In {\em Proceedings of the Tenth International Conference on
  Autonomous Agents and Multiagent Systems (AAMAS-11)},
  Taipei, 2011.
\newblock to appear.

\bibitem{chen-pennock:uai07}
Y.~Chen and D.~Pennock.
\newblock A utility framework for bounded-loss market makers.
\newblock In {\em Proceedings of the Twenty-third Conference on Uncertainty in
  Artificial Intelligence (UAI-07)}, pages 49--56, Vancouver, 2007.

\bibitem{chen-pennock:AIMag2011}
Y.~Chen and D.~Pennock.
\newblock Prediction markets.
\newblock {\em {AI} Magazine}, 31(4):42--52, 2011.

\bibitem{Conitzer:uai09}
V.~Conitzer.
\newblock Prediction markets, mechanism design, and cooperative game theory.
\newblock In {\em Proceedings of the Twenty-fifth Conference on Uncertainty in
  Artificial Intelligence (UAI-09)}, pages 101--108, Montreal, 2009.

\bibitem{Dimitrov-Sami:acmec08}
S.~Dimitrov and R.~Sami.
\newblock Non-myopic strategies in prediction markets.
\newblock In {\em Proceedings of the Ninth {ACM} Conference on Electronic
  Commerce (EC'08)}, pages 200--209, Chicago, 2008.

\bibitem{gneiting:jasa07}
T.~Gneiting and A.~E. Raftery.
\newblock Strictly proper scoring rules, prediction, and estimation.
\newblock {\em Journal of the American Statistical Association},
  102(477):359--378, 2007.

\bibitem{hanson-scoringrules:isf03}
R.~Hanson.
\newblock Combinatorial information market design.
\newblock {\em Information Systems Frontiers}, 5(1):107--119, 2003.

\bibitem{hanson-decisionMarkets99}
R.~Hanson.
\newblock Decision markets.
\newblock {\em {IEEE} Intelligent Systems}, 14(3):16--19, 1997.

\bibitem{hanson-scoringrules:jpm07}
R.~Hanson.
\newblock Logarithmic market scoring rules for modular combinatorial
  information aggregation.
\newblock {\em Journal of Prediction Markets}, 1(1):3--15, 2007.

\bibitem{mccarthy:pnas56}
J.~McCarthy.
\newblock Measures of the value of information.
\newblock {\em Proceedings of the National Academy of Sciences},
  42(9):654--655, 1956.

\bibitem{othman-sandholm:aamas10}
A.~Othman and T.~Sandholm.
\newblock Decision rules and decision markets.
\newblock In {\em Proceedings of the Ninth International Conference on
  Autonomous Agents and Multiagent Systems (AAMAS-10)}, pages 625--632,
  Toronto, 2010.

\bibitem{savage:jasa71}
L.~J. Savage.
\newblock Elicitation of personal probabilities and expectations.
\newblock {\em Journal of the American Statistical Association},
  66(336):783--801, 1971.

\bibitem{Shi-Conitzer:wine09}
P.~Shi, V.~Conitzer, and M.~Guo.
\newblock Prediction mechanisms that do not incentivize undesirable actions.
\newblock In {\em Proceedings of the Fifth Workshop on Internet and Network
  Economics (WINE-09)}, pages 89--100, Rome, 2009.

\bibitem{wolfers:predictionSurvey}
J.~Wolfers and E.~Zitzewitz.
\newblock Prediction markets.
\newblock {\em Journal of Economic Perspectives}, 18(2):107--126, 2004.

\end{thebibliography}

\end{document}